\newtheorem{myalgorithm}{Algorithm}
\title{Generalization Problems with Atom-Variables in Languages with Binders and Equational Theories} 
\titlerunning{Generalization Problems in Languages with Binders and Equational Theories} 
\author{Daniele Nantes-Sobrinho}{Department of Computing, Imperial College London, UK}{dnantess@imperial.ac.uk}{https://orcid.org/0000-0002-1959-8730}{}
\author{Manfred Schmidt-Schau\ss}{Goethe Universit\"at,  Frankfurt am Main, Germany}{manfredschauss@gmail.com}{https://orcid.org/0000-0001-8809-7385}{}
\author{Alexander Baumgartner}{Universidad de O'Higgins, Rancagua, Chile}{alexander.baumgartner.x@gmail.com}{https://orcid.org/0000-0002-4757-5907}{}
\author{Temur Kutsia}{RISC, Johannes Kepler University, Linz, Austria}{kutsia@risc.jku.at}{https://orcid.org/0000-0003-4084-7380}{}
\authorrunning{D. Nantes-Sobrinho et. al.} %
\keywords{Nominal Semantics, Generalization Problem, Equational Theories} 
\begin{document}

\maketitle

\begin{abstract}
  Generalization problems in languages with binders involve computing the most common structure between expressions while respecting bound variable renaming and freshness constraints. These problems often lack a least general solution. However, leveraging nominal techniques, we previously demonstrated that a semantic approach with atom-variables enables the elimination of redundant solutions and allows for computing unique least general generalizations (LGGs).

In this work, we extend this approach to handle associative (A), commutative (C), and associative-commutative (AC) equational theories. We present a sound and weak complete algorithm for solving equational generalization problems, which generates finite weak minimal complete sets of LGGs for each theory.
A key challenge arises from solving equivariance problems while taking into account these equational theories, as identifying redundant generalizations requires recognizing when one expression (with binders) is a renaming of another while possibly considering permutations of sub-expressions. This unexpected interaction between renaming and equational reasoning made this particularly difficult, necessitating semantic tests within the equivariance algorithm.
Given that these equational theories naturally induce exponentially large LGG sets due to subexpression permutations, future work could explore restricted theory fragments where the generalization problem remains unitary. In these fragments, LGGs can be computed efficiently in polynomial time, offering practical benefits for symbolic computation and automated reasoning tasks.


\end{abstract}

\section{Introduction}
\label{sec:intro}

The generalization problem~\cite{Plotkin:1970} (a.k.a. anti-unification problem)  is concerned with finding a generalization of two given input expressions in a formal language, e.g.,  finding a more general program between two different programs, or a more general first-order formula, or a more general proof. Formally, given two expressions $s$ and $t$, the generalization problem is concerned with finding an expression $r$ such that $r\sigma_1=s$ and $r\sigma_2=t$ for some substitutions $\sigma_1$ and $\sigma_2$. For instance, it is easy to see that the expressions $f(a,b)$ and $f(a,g(x,c))$ can be generalized by the variable $z$ (the initial expressions can be retrieved via substitutions $\{z\mapsto f(a,b)\}$ and $\{z\mapsto f(a,g(x,c))\}$). However, this generalizer gives no information about the input expressions. Note that  $f(a, y)$ is also a generalizer, besides, it gives much more information about the initial expressions, in fact, it represents all the similarities. The general approach is to abstract away the particularities of $s$ and $t$ and find a (maximally) common structure between them (in other words, to find a pattern).

As expected, the problem becomes more complicated in languages with binders (e.g., $\lambda$-calculus, the Calculus of Constructions~\cite{DBLP:conf/lics/Pfenning91}, the nominal language~\cite{pitts}, etc), and more sophisticated techniques have been developed. Among the difficulties are the need of: renaming bound variables;  considering various equivalences between expressions such as $\beta$- and $\eta$-conversion; the correct typing, etc. Most of the current frameworks restrict the problem in one way or another, depending on their particular goals.
Extensive research has been done about generalization for higher-order patterns (HOP), which are a restricted class of $\lambda$-terms whose free variables may only be applied to a sequence of distinct bound variables, including equational theories~\cite{DBLP:journals/jar/BaumgartnerKLV17,DBLP:journals/mscs/CernaK20,DBLP:conf/ijcai/CernaK23}. A generic framework for anti-unification of simply typed $\lambda$-terms, which includes HOP generalization as a special case, was studied in \cite{DBLP:conf/rta/CernaK19}.

 This work is about generalization problems for a nominal language with atom-variables and taking into account equational theories for associativity (\A), commutativity (\C) and associativity-commutativity (\AC). The choice is motivated by the expressivity of the language, it can be used to express $\lambda$-terms, or first-order formulas, or $\pi$-calculus processes, etc, and it has notions such as $\alpha$-renaming and freshness embedded in the language, therefore, the usual implicit renaming and freshness choices adopted in many works can be explicitly implemented. In addition, reasoning in the nominal syntax has a first-order-like behaviour that is appealing.

We build upon a framework introduced in~\cite{DBLP:conf/fscd/Schmidt-Schauss22} that uses atom-variables and a semantic approach to discard redundant solutions and compute unique lggs. As expected, working with equational theories requires working in equivalence classes of these theories, which we thought would be a natural extension of the work. Unexpectedly, these equivalence classes interact with our semantic approach when solving the {\em equivariance matching problems} which appear as sub-problems in our algorithm in a non-trivial way that turned out to be a challenging problem. We expand on this and other challenges below.

\noindent {\bf Contributions.}We present a rule-based algorithm, $\enau{E}$, for solving
$E$-generalization problems.
Our approach refines existing structural rules for abstractions and atom-variables while introducing new rules to handle function symbols under equational theories and equivariance. These refinements strengthen the rules, eliminating one of the sources of weak incompleteness.
We propose a method to solve a more general version of the equivariance problem, which consists of finding a permutation $\pi$ and a freshness-context $\nabla$ such that $\nabla\models \pi\cdot s\approx_E t$. Here,  we consider an equational theory $E\in\{\A,\C,\AC\}$, and $\pi,\nabla,s,t$ may contain atom-variables. It requires the analysis to be semantical, represented using the symbol $\vDash$. Note that it is harder to evaluate, e.g., whether the generalized permutation $(A\ C)$ acts in the abstraction such that  $(A\ C)\cdot (\lambda C.\ffac{f} (A, B, C))\approx_{\tt AC}\lambda A.\ffac{f}(A, B, A))$, i.e., makes these two abstractions equal modulo renaming of bound variables and modulo \AC. The difficulty arises because we don't know how $(A\ C)$ acts on $B$. Here, $A, B, C$ are atom-variables which could be instantiated to the same or to different concrete atoms.
%
The valuations of the atom-variables have an effect when branching on the possible permutations of the AC-function symbol $\ffac{f}$. Dealing with the interaction was a challenge, and required specialized methods, described in \Cref{sec:equivariance-new}, to guarantee correctness while branching.

\noindent{\bf Related Work.} Generalization problems have several applications, such as, on parallel recursion scheme detection~\cite{DBLP:journals/fgcs/BarwellBH18}, clone detection \cite{BulychevMinea09}, analogy making~\cite{DBLP:conf/ausai/KrumnackSGK07}, etc.
First-order anti-unification problems modulo equational theories have been extensively investigated~\cite{DBLP:conf/jelia/AlpuenteEEM14, DBLP:journals/iandc/AlpuenteEEM14,DBLP:journals/tcs/Cerna20,DBLP:conf/ijcai/CernaK23}. While the techniques presented there are similar to the ones used here, the interaction with nominal techniques is novel.  Nominal techniques have been investigated in the context of nominal unification~\cite{DBLP:journals/tocl/LevyV12,DBLP:journals/jsc/Schmidt-Schauss19,DBLP:conf/csl/UrbanPG03} including equational theories~\cite{DBLP:journals/mscs/Ayala-RinconSFS21,DBLP:conf/mkm/AyalaRinconFSKN23}, but not much research has been done w.r.t. nominal anti-unification.

In \cite{DBLP:conf/rta/BaumgartnerKLV15}, the authors investigated the problem of computing least general generalizations (lgg) for nominal terms-in-context, i.e., pairs of the form $\tc{\nabla}{s}$, where $\nabla$ is a freshness context and $s$ is a (standard) nominal term.
It was observed that if atoms were taken literally, there does not exist an lgg for two terms-in-context \cite{DBLP:conf/rta/BaumgartnerKLV15}. However, viewing the notation of atoms only as names for atoms and using atom-variables, there exists an lgg and also a minimal set of generalizations \cite{DBLP:conf/fscd/Schmidt-Schauss22,schmidt-schauss-nantes-sobrinho:23}. An alternative approach taken in~\cite{DBLP:conf/rta/BaumgartnerKLV15} is to work with a parametric algorithm that is based on the set of atoms occurring in the problem and computed while searching for the generalizer. We believe that we can extend that algorithm with equational theories, while maintaining the parametric approach in a direct way.

Due to space limitations  proofs and more examples and additional material are given in the Appendix.

\section{Preliminaries}\label{sec:prelim}

We assume the reader is familiar with the nominal syntax, and only recall the main concepts and notations that are needed in the paper. For more details we refer the reader to~\cite{DBLP:journals/iandc/FernandezG07,pitts}.

\subsection{$\grnlang$: A Ground Nominal Language with  Equational Theories.}
Fix a countable infinite set of {\em atoms} $\mathbb{A}=\{a,b,c,\ldots\}$. Atoms are identified by their name, so it will be redundant to say that two atoms $a$ and $b$ are different. A signature $\Sigma$ is a set of {\em term-formers} (disjoint from $\mathbb{A}$) such that to each $f \in \Sigma$
is assigned a unique non-negative integer $n$, called the {\em arity} of $f$, written as $f{:}n$, and {\em constants} ($\const{a,b,c}\ldots$) are symbols in $\Sigma$ with arity zero.
A {\em ground permutation} $\pi$  is a bijection $\mathbb{A} \rightarrow\mathbb{A}$ which is the identity almost everywhere, i.e., the set $\{a \in \mathbb{A} \mid  \pi(a) \neq  a\}$ is finite. Permutations are represented by sequences of {\em swappings} written as $(a \ b )$ where $a,b \in \mathbb{A}$. {$\mathit{Id}$} denotes the identity permutation. The composition of two permutations $\pi$ and $\pi'$ will be denoted as $\pi \circ \pi'$. We may write permutations omitting the $\circ$-symbol as $(a_1~b_1)(a_2~b_2) \ldots (a_n~b_n)$.

The {\em nominal language} $\grnlang$ is built by the grammar:
\[
\begin{array}{l@{\hspace{1cm}}r}  S, S_i ::= a \mid \pi \cdot S \mid f(S_1,\ldots, S_n) \mid \lambda a. S
	& \pi ::= \mathit{Id} \mid (a \ b ) \circ  \pi
\end{array}
\]
where $a,b$ are atoms,  $\abst{a}{S}$ denotes the {\em abstraction} of atom $a$ in the term $S$, and $f(S_1,\ldots, S_n)$ is a {\em function application}, where $f\in \Sigma$ and $f{:}n$.

Since $\grnlang$ consists of ground terms (sometimes called expressions), every term in $\grnlang$ can be simplified by applying permutations such that only terms (and subterms) of the  forms $a \mid f(S_1,\ldots,S_n) \mid \lambda a.S$ remain, i.e. expressions of a lambda-calculus with atoms and function symbols. The effect of a permutation application is defined recursively on the structure of a term as usual.
The predicate $\approx$, which stands for $\alpha$-equivalence of the terms in $\grnlang$, is defined as usual~\cite{DBLP:conf/csl/UrbanPG03,DBLP:journals/tcs/UrbanPG04}:
\[\frac{}{a\approx a}  \qquad \frac{S \approx T}{\lambda a.S\approx \lambda a.T}\qquad \frac{S \approx (a\,b)\cdot T \quad a\# T}{\lambda a.S\approx \lambda b.T} \qquad \frac{S_1\approx T_1, \quad \cdots,\quad S_n\approx T_n}{ f(S_1,\ldots S_n)\approx f(T_1,\ldots,T_n)}\]
where the freshness predicate $\#$ is defined
%
\[
 \frac{}{a\# b} \qquad \frac{}{a\# \lambda a.T} \qquad \frac{  a\# T}{a\# \lambda b.T} \qquad \frac{a \# T_1 \quad \cdots \quad a \# T_n}{a \# f(T_1,\ldots T_n)}
\]
We say $a\#T$ is a {\em freshness constraint} and a set of freshness constraints is a {\em freshness context}. Intuitively, $a\# S$ means that $a$ is {\em fresh} for $S$ (it does not occur free in $S$).
Note that  $a\# a$ is not derivable. 
Finally,  a freshness constraint $a\# S$ {\em holds} (or is {\em valid}) iff it is derivable using the rules.
Validity can be established for $\alpha$-equality constraints in a similar way.

 We consider an extension of the language $\grnlang$ for a signature $\Sigma$ which contains function symbols satisfying an {\em equational theory}, denoted by $E$. In this work we are interested in the theories associativity (\A), commutativity (\!\C) and associativity-commutativity (\AC). The signature is composed of $\Sigma = \Sigma_\emptyset \dotcup \Sigma_{\tt A} \dotcup \Sigma_{\tt C} \dotcup \Sigma_{\tt AC}$, where $\dotcup$ denotes the disjoint union, $\Sigma_\emptyset$ denotes the set of uninterpreted function symbols and $\Sigma_{\tt A} \cup \Sigma_{\tt C} \cup \Sigma_{\tt AC}$ are symbols of arity $2$ that satisfy the respective equational theories. The theories are described by the following (well-known) identities: $ \{f(S_1,S_2) = f(S_2,S_1) \mid f \in \Sigma_{\tt C} \cup \Sigma_{\tt AC}\} \text{\quad (commutativity)}$ and
 $\{f(f(S_1,S_2),S_3) = f(S_1,f(S_2,S_3)) \mid f \in \Sigma_{\tt A} \cup \Sigma_{\tt AC}\} \text{\quad (associativity)}$.

We use the superscript notation $\ffa{f}$, $\ffc{f}$ and $\ffac{f}$ to denote, respectively, $f\in \Sigma_{\tt A}$, $f\in \Sigma_{\tt C}$ and $f\in \Sigma_{\tt AC}$.
We may use a flattened representation of expressions with associative function applications. For example, if $f\in \Sigma_{\tt A}\cup \Sigma_{\tt AC}$, all subterms of the form $f(T_1,\dots,\allowbreak f(S_1,\dots,\allowbreak S_m),\dots,\allowbreak T_n)$ may be written as $f(T_1,\dots,\allowbreak S_1,\dots,\allowbreak S_m,\dots,\allowbreak T_n)$ (if $n+m \ge 2$).

In Figure~\ref{fig:equational_rules} we define the predicate  $\approx_E$ that extends $\alpha$-equivalence of nominal terms by the equality axioms.
This language will be the  semantic base used in the following sections.

\begin{figure}[t!]
\hrule
{\small
\qquad
\begin{minipage}{.45\textwidth}
\begin{prooftree}
\AxiomC{$ T_1 \approx_E S_1 ~\ldots~   T_n \approx_E S_n, n \geq 2$}
\RightLabel{(\A)}
\UnaryInfC{$ \ffa{f}(T_1,\ldots, T_n) \approx_E \ffa{f}(S_1,\ldots, S_n)$}
 \end{prooftree}
 \end{minipage}
\begin{minipage}{.45\textwidth}
 \begin{prooftree}
\AxiomC{$ T_1\approx_E S_2 \quad  T_2\approx_E S_1$}
\RightLabel{(\!\C)}
\UnaryInfC{$ f^C(T_1,T_2)\approx_E f^C(S_1,S_2)$}
\end{prooftree}
 \end{minipage}
\begin{prooftree}
\AxiomC{$ T_1\approx_{E} S_{\rho(1)} \ldots T_n\approx_{E} S_{\rho(n)} ,~ \rho$ is a permutation of $\{1,\ldots,n\}$, $n \geq 2$ }
\RightLabel{(\AC)}
\UnaryInfC{$ \ffac{f}(T_1,~\ldots, ~T_n)\approx_{E} \ffac{f}(S_{\rho(1)},\ldots, S_{\rho(n)})$}
\end{prooftree}
}
\hrule
\caption{Equational Rules. }\label{fig:equational_rules}
\end{figure}

\subsection{\nlat: A Nominal Language with Variables and Equational Theories.}
We recall the basic notions of nominal terms with {\em atom-variables}, which were introduced in~\cite{DBLP:conf/fscd/Schmidt-Schauss22}. We fix a countable infinite set of {\em term-variables} ${\cal X}=\{X, Y, Z,\ldots \}$ and a countable infinite set of {\em atom-variables}  \(\atvar=\{A,B,C, \ldots\}\) such that ${\cal X}\cap \atvar= \emptyset$. Variables are distinct from objects in the
signature $\Sigma$ which is taken from $\grnlang$. The  {\nlat} language is defined by the following grammar of terms (also denoted $T(\Sigma, {\cal X}, \atvar)$) and permutations $\pi$ with atom-variables instead of atoms:
\[
\begin{array}{rl@{\hspace{1cm}}r}
s, s_i &::= W \mid \pi \cdot X \mid f(s_1,\ldots, s_n) \mid  \lambda W. s & \text{(nominal terms)}\\
W, W_i &::= \pi \cdot A & \text{(atom-variable suspensions)}  \\
\pi &::= id \mid (W_1 \ W_2) \circ \pi & \text{(atom-variable permutations)}
\end{array}
\]
Differently from the standard grammar of a nominal language~\cite{DBLP:journals/iandc/FernandezG07}, {\nlat} terms do not contain atoms. We denote the inverse of a permutation $\pi$ with $\pi^{-1}$. If $\pi$ is a list of swappings $(A_1~B_1) \cdots (A_n~B_n)$, then $\pi^{-1} = (A_n~B_n) \cdots (A_1~B_1)$, since $(A_1~B_1) \cdots (A_n~B_n)\cdot (A_n~B_n) \cdots (A_1~B_1)$ is the trivial permutation under any interpretation of atom-variables. There are two kinds of {\em suspensions}: $\pi\cdot A$ and $\pi\cdot X$, i.e., one on atom-variables and one on term-variables, respectively. We write $\pi\cdot V$ to denote a suspension over a variable $V\in \atvar \cup {\cal X}$, and $\AV(F)$ for the set of atom-variables occurring in a the formula $F$. {\em Abstractions} are more general here as well:  $\lambda W.s$ denotes the abstraction of a (suspended) atom-variable $W$ on a term $s$. Function application extends to nominal terms in {\nlat} as expected. As in $\grnlang$, in~\nlat~we permit function symbols that are commutative, associative, or both.

\begin{example} Simple examples of \nlat-terms are: the abstraction of an atom variable on a term $\lambda A. f(A)$; a permutation of atom-variables $(A \ B)$ that is suspended on an atom-variable $C$, that is, $(A \ B)\cdot C$; and finally, a permutation of atom-variables suspended on a term variable  $(A \ B)\cdot X$. Note that the semantics of these terms can only be established once the atom-variables $A,B,C$ are instantiated to concrete atoms and the term-variable is instantiated to some term. More complex examples are:  $ \lambda (A \ B) \cdot C. f(A, (B \ C) \cdot X)$ and $ \lambda A.f( (A \ B) (B \ C)\cdot X,  B)$.
\end{example}
Substitutions, ranged over by $\sigma, \gamma, \ldots$, are  maps from term-variables to terms in {\nlat} and from atom-variables to suspensions of atom-variables, such that the {\em domain},  $\mdom(\sigma)=\{V\in \atvar\cup {\cal X} \mid V\sigma \neq V\}$ is finite.
The {\em action of a substitution} on a term $t$, denoted as $t\sigma$, is inductively defined as usual.
 Substitutions are written in postfix notation: $t(\sigma\gamma) = (t\sigma)\gamma$.

We will also use \nlat-freshness constraints of the form $W\#s$ where $s$ is an \nlat-term and \nlat-freshness contexts ($\Delta, \nabla, \ldots$), which are conjunctions (written as sets) of \nlat-freshness constraints. Note that freshness constraints of the form $A\#s$ are
sufficient, since $(W_1\ W_2){\cdot}A\#s$ is equivalent to $A\#(W_1\ W_2) {\cdot}s$. Thus, constraints with permutations such as $A\#\lambda B.((A C)(D~B)\cdot A)$
are possible.

 An {\em \nlat-term-in-context} $(\nabla,s)$ is a pair of an \nlat-freshness context $\nabla$ and an \nlat-term $s$.

\section{The Semantics of \nlat-Expressions and Freshness Constraints}\label{ssec:semantics}

The semantics of \nlat-expressions is based on interpretations $\rho$ to ground expressions in $\grnlang$.

\begin{definition}[Interpretations]\label{def:sem_fun} 
    An interpretation $\rho$ maps \nlat-terms to $\grnlang$-terms.
    It is determined by a mapping from atom-variables to concrete atoms and term-variables to (ground) $\grnlang$-terms.
Interpretations act on \nlat-terms homomorphically.
It is also extended to freshness constraints, contexts, and equational theories, as expected.
\end{definition}

We define now the semantics, which includes  equivalence of expressions, validity of freshness constraints and contexts. We also assume that  an equational theory $E$ is defined by axioms from \A, \AC, and \!\C~for specific function symbols. Like substitution applications,  applications of  interpretations are written in postfix notation.  

\begin{definition}[Semantics in \nlat]\label{def:sem_rel} The semantics of~\nlat~is defined on expressions in $\grnlang$: \hfill
\begin{enumerate}
    \item\label{def:interp-1} Two \nlat-permutations  $\pi_1,\pi_2$ are {\em semantically equivalent} iff for all interpretations $\rho$: $\pi_1\rho$ is the same permutation as  $\pi_2\rho$.  This is denoted as $\pi_1 \equiv \pi_2$

    \item\label{def:interp-2} Two \nlat-terms  $s,t$ are {\em semantically equivalent}
    iff for all interpretations $\rho$: $s\rho \approx_E t\rho$.
       \item\label{def:interp-3} An \nlat-freshness constraint $W\#s$ is {\em valid}, iff for all interpretations $\rho$: $W\rho\#s\rho$ is valid.
    \item\label{def:interp-4} A  context $\nabla$ is {\em valid}, if for all interpretations $\rho$, every constraint in $\nabla\rho$ holds.
     \item\label{def:interp-5} A  context $\nabla$ is {\em consistent} w.r.t. $\rho$  iff  every constraint in $\nabla\rho$ holds. In this case, we say that $\rho$ {\em respects} $\nabla$.
\end{enumerate}
\end{definition}

Note that items \ref{def:interp-1},\ref{def:interp-3},\ref{def:interp-4},\ref{def:interp-5} are independent of the equational theory.

\begin{example}\label{ex:abstraction_constraint}
Interesting constraints such as $A\#\lambda B.A$ or $ C\# \lambda A. \lambda B. C$ are expressible in \nlat.    Note that $A\#\lambda B.A$ is consistent w.r.t. $\rho$ iff  $A\rho = B\rho$. Differently,  $C\# \lambda A. \lambda B. C$ is consistent w.r.t. $\rho$ iff $ C\rho = A\rho \vee C\rho = B\rho$: in fact, for an interpretation $\rho$ of $A,B,C$ such that $C\rho \not= B\rho \wedge C\rho \not= A\rho$, the constraint would reduce to  $C\rho\#C\rho$, which is inconsistent.
\end{example}

We can now consider {\em judgements} of the form $\nabla \vDash A\# s, \nabla\vDash s\approx_E t$ and $\nabla \vDash \pi_1\equiv\pi_2$, and their meaning is established next.

\begin{definition}[Semantics of Judgements]\label{def:semantics_judgements}
    Let $\nabla$ be a consistent \nlat-freshness context, $A\#s$ a freshness constraint and $\pi_1,\pi_2$ be \nlat-permutations. We say that
    \begin{itemize}
    \item $\nabla\vDash A\#s$ {\em holds} iff for all interpretations $\rho$: $\nabla\rho$ valid $\implies$  $A\rho\# s\rho$  valid.
    \item $\nabla\vDash s\approx_E t$ {\em holds} iff for all interpretations $\rho$: $\nabla\rho$ valid $\implies$  $s\rho\approx_E t\rho$  valid.
    \item $\nabla\vDash \pi_1 \equiv  \pi_2$ {\em holds} iff for all interpretations $\rho$: $\nabla\rho$ valid $\implies$  $\pi_1\rho = \pi_2\rho$.
    \end{itemize}
\end{definition}


\section{Generalization Problems in~\nlat~modulo Equational Theories}


We are interested in the following problem:
\begin{mdframed}
\begin{description}
\item[{\bf Given:}]  Two \nlat-terms-in-context $(\nabla,s)$ and $(\nabla,t)$.
\item[{\bf Find:}] An \nlat-term-in-context $\tc{\Gamma}{r}$ that is a generalization modulo $E$ of $\tc{\nabla}{s}$ and $\tc{\nabla}{t}$ which preserves most of their common structure while taking into account the equational theory $E$.
\end{description}
\end{mdframed}
This problem is known as the {\em generalization problem modulo $E$} (also known as the {\em anti-unification problem modulo $E$}), which we will investigate in the context of the language~\nlat. In the following sections, we will introduce the main ingredients to solve this problem, which we will refer to simply as generalization (or anti-unification), as reasoning modulo $E$ will be used throughout this work.

\subsection{A Semantic Approach to Generalization in~\nlat}
We follow a semantic approach to solve the generalization problem in~\nlat~by investigating the semantics of the pairs  $(\nabla,s)$ and $(\nabla,t)$ that we want to generalize. The semantics will be based on the equivalence classes  modulo $\approx_E$ (i.e. modulo $E$ and $\alpha$-renaming) of the ground instances $s$ and $t$ that respect the corresponding instances of $\nabla$. First, we define the semantics of a pair $(\nabla,s)$ w.r.t. interpretations of $s$ on the ground language $\grnlang$:

\begin{definition}\label{Def:semantics-term-in-context}
    The {\em semantics of} $(\nabla,s)$, denoted $\sem{\nabla}{s},$ consists of the following set:
    $$\sem{\nabla}{s}=\{[s\rho]_{E} \mid   \rho \text{ is an interpretation and }  \nabla \rho \text{ holds}\},$$
    where $\eqclass{r}$ denotes the equivalence class of $r$ modulo $\approx_E$.
\end{definition}

Note the semantics of terms-in-context that contain abstractions are always infinite due to the existence of infinitely many names for $\alpha$-renaming:
\begin{example}[Semantics of Abstractions - part 1]
    The semantics of  $(\{A\#B\}, \lambda A. f(A,B))$ is the following:
    $$
    \begin{aligned}
    \sem{\{A\#B\}}{\lambda A. f(A,B)}&=\left\{ \eqclass{\lambda A\rho. f(A\rho,B\rho)} \mid \rho \text{ is an interpretation}\right.
     \hspace{0cm }\left. \text{ and } A\rho \# B\rho \text{ holds}\right\}\\
    &=\{\eqclass{\lambda A\rho. f(A\rho, B\rho)} \mid A\rho \neq B\rho\}
    = \{\eqclass{\lambda a.f(a,b)},\eqclass{\lambda a.f(a,c)},\ldots\}
    \end{aligned}
    $$
 The second argument of $f$ can be any concrete atom, as long as it is not captured by the abstraction.
\end{example}

\begin{example}[Semantics of Abstractions - part 2]
    The semantics of the term-in-context $(\{A\#C\}, \lambda A. f(A,B))$ is the following:
    $$
    \begin{aligned}
    \sem{\{A\#C\}}{\lambda A. f(A,B)}&=\left\{ \eqclass{\lambda A\rho. f(A\rho,B\rho)} \mid  \rho \text{ is an interpretation}\right.
     \hspace{0cm }\left. \text{ and } A\rho \# C\rho \text{ holds}\right\}\\
    &=\{\eqclass{\lambda A\rho. f(A\rho, B\rho)} \mid A\rho \neq C\rho\}\\
    \end{aligned}
    $$
 the {\semfunappname} $C\rho$ is irrelevant for the semantics since it does not occur in the term. One can always find $\rho$ such that $ A\rho \neq C\rho$.
 Hence, $(\{A\#C\},\lambda A. f(A,B))$ is equivalent to $(\emptyset,\lambda A. f(A,B))$
 \end{example}

The semantics of terms-in-context provides a framework to  for comparing terms and determining their relative generality. Specifically, a pair $(\nabla, t)$ is {\em more general w.r.t. $E$ than}  $(\Delta, s)$, denoted $(\nabla, t)\epreceq (\Delta, s)$,   iff $\sem{\Delta}{s}\subseteq \sem{\nabla}{t}$. They are {\em equivalent} iff both $(\nabla, t)\epreceq (\Delta, s)$ and $(\nabla, t)\succeq_E (\Delta, s)$ hold, meaning that their semantics are identical $\sem{\Delta}{s}= \sem{\nabla}{t}$. This allows us to formally define an
generalization and {\em the least general generalization} ($lgg_E$) of two terms-in-context.

\begin{definition}[Generalization, $lgg_E$]  A term-in-context $(\nabla,r)$ is an {\em generalization of}  $(\nabla_1,s)$ and $(\nabla_2,t)$ iff $(\nabla,r)$ the following holds: $(\nabla, r)\epreceq (\nabla_1, s)$ and $(\nabla, r)\epreceq (\nabla_2, t)$.
 The pair $(\nabla,r)$ is called a {\em least general generalization} (\lgge) of $(\nabla_1,s)$ and $(\nabla_2,t)$ iff it is a `smallest' generalization, i.e., every other generalization $(\nabla',r')$ of $(\nabla_1,s)$ and $(\nabla_2,t)$, is more general than $(\nabla,r)$, that is,   $(\nabla',r')\epreceq(\nabla,r)$ holds.
\end{definition}

A set $G$ of generalizations for  $(\nabla_1,s)$ and $(\nabla_2,t)$ is {\em complete} iff for each generalization $(\Delta,r')$ of  $(\nabla_1,s)$ and $(\nabla_2,t)$, there exists
$(\nabla,r)\in G$ such that $(\Delta,r')\epreceq (\nabla,r)$.
The set $G$ is called {\em minimal}, if for all different $(\nabla', r_1), (\nabla'',r_2)\in G$, we have  $(\nabla', r_1)\not\!\epreceq (\nabla'',r_2)$ and $(\nabla'', r_2)\not\!\epreceq (\nabla',r_1)$.
Such minimal set will be denoted as $\mcsg_E((\nabla_1,s),\allowbreak (\nabla_2,t))$.

The existence and cardinality of the minimal complete set of generalizations characterize the anti-unification type of an equational theory $E$. Thus, $E$ has \emph{nullary} (anti-unification) type if there are terms-in-context $T_1$ and $T_2$ for which $\mcsg_E(T_1,T_2)$ does not exist. Otherwise, $E$ has type
 \emph{unitary} (resp. \emph{finitary}), if $\mcsg_E(T_1,T_2)$ is a singleton (resp. finite) for all terms-in-context $T_1$ and $T_2$;
 and \emph{infinitary}, if $\mcsg_E(T_1,T_2)$ is infinite for some $T_1$ and $T_2$.

\section{ $\enau{E}$: An Anti-Unification Algorithm}
\label{sec:enau}

In this section, we introduce the rule-based algorithm $\enau{E}$ to solve the generalization problem. The rules for $\enau{E}$ are described in  two parts: (i) Figure~\ref{fig:nau_rules} contains the basic rules that have no explicit interaction with $E$; (ii) Figure~\ref{fig:enau_rules} contains the rules dealing with $E$. In addition, $\enau{E}$ relies on a {\em novel} algorithm \eqvm~for solving the {\em equivariance problem modulo $E$}, which is detailed in \Cref{sec:equivariance-new}. Rules in~Figure~\ref{fig:nau_rules}  labeled in \aoldrule{gray} remain unchanged from~\cite{DBLP:conf/fscd/Schmidt-Schauss22}, as they required no modification to accommodate the equational theories in $E$. In contrast, rules labeled in $\anewrule{green}$ are either new or improved versions of existing rules.
The symbol $\dotcup$ denotes disjoint union.



The $\enau{E}$ operates on tuples (called states) of the form $P;\; S ;\; \Gamma;\;  \sigma$
where:
\begin{itemize}
\item $P$ denotes the set of unsolved {\em anti-unification equations} between terms, denoted $X\colon t\triangleq s$. Here, $X$ is a term-variable that will represent the generalization of $t$ and $s$.
\item $S$ denotes the set of solved anti-unification equations (the store). The store $S$ may contain anti-unification equations between suspended atom-variable permutations, denoted  $C\colon \pi\cdot A_1\triangleq \pi_2\cdot A_2$, where $C$ is an atom-variable that will represent the generalization of $\pi\cdot A_1$ and $\pi_2\cdot A_2$.
\item $\Gamma$ denotes the freshness context computed so far which constrains generalization variables;
\item $\sigma$ is a substitution, computed so far, mapping generalization variables to terms in \nlat.
\end{itemize}
Furthermore, there is as a global parameter a freshness context $\nabla$ that does not constrain generalization variables.
Note that in $P$, generalization variables are only term-variables, while in $S$ they can be term- or atom-variables.
$X$ and $C$ are called {\em generalization variables}.

\begin{figure}[!t]\label{fig:enau_rules}
\small{
\[
\begin{array}{|l|}
\hline
\aoldrule{Dec} \ \text{\bf  Decomposition}\\
 \{X\!:  f\ttup{t}{m}  \triangleq  f\ttup{s}{m} \}\dotcup P; S; \Gamma; \sigma \Lra     \bigcup_{i=1}^m\{Y_i\!: \! t_i\triangleq s_i\}\cup P; S;  \Gamma ;
      \sigma \{X \mapsto f\ttup{Y}{m}\}\\[1mm]
    \text{where }
    f\in \Sigma \text{ and }
    \ttup{Y}{m}=Y_1,\ldots,Y_m \text{ are fresh variables}, m\ge 0.\\[2mm]
\aoldrule{Abs} \text{\bf Abstraction}\\
 \{ X :  \abst{W_1}{t} \triangleq \abst{W_2}{s} \}\dotcup P;S ; \Gamma;  \sigma \Lra   \\
\hspace{0.5cm}     \{ Y: \swap{W_1}{C}\permef t \triangleq \swap{W_2}{C}\permef s \}\cup P;  S;    \Gamma\cup\{C\#\lambda W_1.t, C\#\lambda W_2.s\}; \sigma\{X\mapsto \lambda C.Y\},\\[1mm]
\text{where }
 Y \text{ is fresh and  }
 C \text{ is a new atom-variable}. \\[2mm]
 \aoldrule{SusAA} \ \text{\bf Suspension on Atom-Variables}\\
 \{ X :  \pi_1{\cdot}A \triangleq \pi_2{\cdot}B \}\dotcup P;S ; \Gamma;  \sigma \Lra
     P;  S;    \Gamma; \sigma\{X\mapsto \pi_1{\cdot}A\},\quad
\text{if } \
\Gamma \models  \pi_1{\cdot}A =  \pi_2{\cdot}B.
\\[2mm]
 \aoldrule{SusYY} \ \text{\bf Suspensions on $Y$}\\
 \{ X :  \pi_1{\cdot}Y \triangleq \pi_2{\cdot}Y \}\dotcup P;S ; \Gamma;  \sigma \Lra
     P; S;    \Gamma; \sigma\{X\mapsto \pi_1{\cdot}Y\},\quad
\text{if } \
\Gamma \models  \pi_1{\cdot Y}  =  \pi_2{\cdot Y}.
\\[3mm]
 \anewrule{ SolAB} \ \text{\bf Solve different atom-suspensions}\\
 \{X :  \pi_1{\cdot}A \triangleq \pi_2{\cdot}B \}\dotcup P;S ; \Gamma;  \sigma \Lra   \\
\hspace{0.5cm}    P;  S \cup \{C: \pi_1{\cdot}A \triangleq \pi_2{\cdot}B  \};    \Gamma \cup \{C\#\lambda \pi_1{\cdot}A.\lambda \pi_2{\cdot}B.C \}; \sigma\{X\mapsto C\},\\[1mm]
\text{if } \
\Gamma \not \models  \pi_1{\cdot}A =  \pi_2{\cdot}B \text{ and } C \text{ is a new atom-variable}.
\\[2mm]
\aoldrule{Sol} \ \text{\bf Solve}\\
 \{{ X:  t \triangleq s}\}\dotcup P; S; \Gamma;\; \sigma \Lra
     P; S\cup\{{ X:  t \triangleq s}\};\Gamma ;\sigma\quad\\[1mm]
     \text{where none of the previous rules is applicable.}
     \\[2mm]
\anewrule{Mer}\ \text{\bf  Merge}\\
 \emptyset; \{{ Z_1 : t_1 \triangleq s_1, Z_2 : t_2
    \triangleq s_2}\}\dotcup S ; \Gamma; \sigma \Lra
     \emptyset;  \{{ Z_1: t_1 \triangleq s_1}\}\cup S ;
     \Gamma{\{Z_2 \mapsto \pi\permap Z_1\}}; \sigma\{ Z_2 \mapsto \pi\permef  Z_1\},\\[1mm]
\text{where }  \pi=\eqvm(\{t_1 \match t_2, s_1 \match s_2\}, \Gamma) \text{ and } (Z_1,Z_2) \text{ is either a pair of atom variables or}\\
\text{a pair of term variables.}\\
\hline
\end{array}
\]
}
\caption{$\enau{}$ transformation rules for $E=\emptyset$.
}\label{fig:nau_rules}
\end{figure}
\normalsize

\subparagraph*{Idea of the algorithm.} Given \nlat-terms-in-context $\tc{\nabla}{t}$ and $\tc{\nabla}{s}$, the (non-deterministic) algorithm for $\enau{E}$ starts with $(\{X\colon t\triangleq s\};\; \emptyset;\; \nabla;\; Id)$ and applies the rules in Figure~\ref{fig:nau_rules} and Figure~\ref{fig:enau_rules} exhaustively, until we reach a final state $(\emptyset;S;\Gamma;\sigma)$. The sequence of rule applications is called a ($\enau{}$-){\em derivation}.
The generalization variable $X$ occurs neither in $\nabla$, nor in $t$, nor in $s$.
Intuitively, it represents the generalization that becomes less and less general as the algorithm advances (by applying some rules).
The result of this procedure is of the form $\emptyset;\; S;\; \Gamma ;\; \sigma$,
the \emph{computed generalization is} $\tc{\Gamma}{X\sigma}$ and $S$ contains the differences of $t$~and~$s$.
To compute the set of all results, the contribution of all non-deterministic computation paths has to be gathered in a set.


 \subparagraph*{Description of the \aoldrule{unmodified} rules in Figure~\ref{fig:nau_rules}.}  Rule \aoldrule{Dec} is standard and applies when  both terms have a common head function symbol. It decomposes the problem into several subproblems and extends the substitution with a mapping that specifies the generalization. Rule \aoldrule{Abs}
 generalizes abstractions on $W_1,W_2$ by choosing a new atom-variable $C$ that was not used anywhere else, and swapping both $W_1, W_2$ with $C$ on the subterms underneath the abstraction. Note that the freshness context is extended with constraints that might not allow simplifications (relying on semantics). The substitution is extended with the current generalization variable, say $X$, specialized to an abstraction on $C$. Rule \aoldrule{SusAA} is about generalization of semantically-equal suspensions of atom-variables. The rule
  removes the problem and extends the substitution with a mapping to this suspension. Rule \aoldrule{SusYY} is similar to \aoldrule{SusAA}, but applies to suspensions on the same term-variable.  Rule \aoldrule{Sol} applies when none of the previous rules can be applied. It only moves the problem to the store and does not alter the substitution.

  \begin{example}[For abstraction]
 Consider $P=\{X: f(\lambda A.A,Z)\triangleq f(\lambda B.C,Z)\}$   and $\nabla=\emptyset$. The $\enau{\emptyset}$-derivation starts with  $ \{X: f(\lambda A.A,Z)\triangleq f(\lambda B.C,Z)\};\emptyset;\emptyset;Id$, after $ {\arule{Dec}}$, $
     \{X_1: \lambda A.A\triangleq \lambda B.C , X_2: Z\triangleq Z\};\emptyset;\emptyset;\{X\mapsto f(X_1,X_2)\}$, after an application of $\arule{Abs}$, we get
      $\{Y:   D\triangleq (D\ B)\cdot C , X_2: Z\triangleq Z\};\emptyset;\{D\# \lambda A.A,D\# \lambda B.C\};\{X\mapsto f(\lambda D.Y,X_2)\}$.
 \end{example}

  \subparagraph*{Description of the \anewrule{new/improved} rules in Figure~\ref{fig:nau_rules}.} Rule \anewrule{SolAB} applies when the context cannot prove that the atom-variable suspensions are the same: we choose a new atom-variable $C$ and move the anti-unification equation to the store with $C$ as its generalization variable. Furthermore, we introduce the constraint $C\# \lambda \pi_1\cdot A.\lambda \pi_2\cdot B.C$ which is equivalent to $C=\pi_1\cdot A$ or $C=\pi_2\cdot B$. This constraint collects information from the store that $C$ can only be one of the two suspensions, providing a more specific generalization. This restriction is stronger than the initial version introduced~\cite{DBLP:conf/fscd/Schmidt-Schauss22}.However, as we will see in Example~\ref{ex:weak}, further strengthening is required to guarantee the completeness of  $\enau{E}$.

Rule \anewrule{Mer} applies when the set of problems is empty, and it identifies  if there are no redundancies in the store, i.e., the same solved problem (possibly with some renaming) under different generalization variables. Thus, given two problems in the store $\{{ Z_1 : t_1 \triangleq s_1, Z_2 : t_2 \triangleq s_2}\}$ the goal of \eqvm~is to find a permutation $\pi$ such that $\pi\cdot t_1\approx_E s_1$ and $\pi\cdot s_1\approx_E s_2$. If such permutation exists, both problems are merged into one, and the appropriate updates are done on the context and substitution, e.g. the substitution must be updated with the information $Z_2\mapsto \pi\cdot Z_1$.
The rule \anewrule{Mer} relies on a new version (defined in~\Cref{sec:equivariance-new}) of the algorithm
(\eqvm) for  {\em equivariant matching  modulo $E$} for \nlat$_E$, which respects the current freshness constraints, to identify if there is a permutation of atom-variables that maps one solved term to another, while taking into account the equational theories in $E$. A new algorithm is necessary due to the unexpected interaction between suspensions of atom-variables and equational theories when computing the permutation $\pi$, we give more details in \Cref{sec:equivariance-new}, and illustrate it in Example~\ref{ex:equivariance_e}.


 \subparagraph*{Description of the \anewrule{new} rules in Figure~\ref{fig:enau_rules}.} In~\Cref{fig:enau_rules},  $Y_1, Y_2$ are fresh variables of the corresponding sorts,  $1\leq k< n$, $1\leq l< m$ and $ n,m\geq 2$. We assume that \A~and \AC~function applications are flattened. Also, both $\ffac{f}(t)$ and $\ffa{f}(t)$ simply stand for $t$. Write $\ffa{f}(\overline{t})_k$ to denote $\ffa{f}(t_1,\ldots, t_k)$, and $\ffa{f}(\overline{t})_{k+1\ldots n}$ denotes $\ffa{f}(t_{k+1},\ldots, t_n)$. Similarly, for $\ffac{f}(\overline{t})_k$ and $\ffac{f}(\overline{t})_{k+1\ldots n}$.
The rules are as expected and illustrated with examples:

 \begin{figure}[!t]
{\small
\[
\begin{array}{|l|}
\hline
    \anewrule{DecA} \  \text{\bf Associative Decomposition}\\
    \{X:  \ffa{f}\ttup{t}{n} \triangleq \ffa{f}\ttup{s}{m}\}\dotcup P; S; \Gamma; \sigma \Lra 
    \left\{\begin{aligned}
    &Y_1:  \ffa{f}\ttup{t}{k}\triangleq \ffa{f}\ttup{s}{l},\\
    & Y_2 : \ffa{f}\ttup{t}{{k+1}..n} \triangleq \ffa{f}\ttup{s}{{l+1} ..m}\end{aligned}\right\}\cup P; S; \Gamma ; \sigma \{X \mapsto \ffa{f}(Y_1,Y_2)\}\\[1mm]
        \text{if } 1 \leq k < n; 1 \leq l < m
        \\[2mm]
      \anewrule{DecC}  \  \text{\bf Commutative Decomposition}\\
\{X:  \ffc{f}(t_1,t_2) \triangleq \ffc{f}(s_1,s_2)\}\dotcup P; S; \Gamma; \sigma \Lra   
\{Y_1:  t_1\triangleq s_i, Y_2 : t_2 \triangleq s_{3-i}\}\cup P; S;  \Gamma ;
      \sigma \{X\mapsto \ffc{f}(Y_1,Y_2)\}\}\\[2mm]
      \anewrule{DecAC} \ \text{\bf Associative-Commutative Decomposition}\\[1mm]
  \{X:  \ffac{f}\ttup{t}{n} \triangleq \ffac{f}\ttup{s}{m}\}\dotcup P; S; \Gamma; \sigma \Lra 
  \left\{
  \begin{aligned}
 & Y_1:  \ffac{f}\ttup{t_{i}}{k}\triangleq \ffac{f}\ttup{s_{j}}{l}, \\
  & Y_2 : \ffac{f}\ttup{t_{i}}{{k+1}..n} \triangleq \ffac{f}\ttup{s_{j}}{l+1..m}
  \end{aligned}\right\}\cup P; S;  \Gamma ;\sigma \{X\mapsto \ffac{f}(Y_1,Y_2)\}\\[1mm]
\text{where } \{i_1\dots,i_n\}\approx\{1,\dots,n\},\{j_1\dots,j_m\}\approx\{1,\dots,m\} \\[1mm]
\hline
\end{array}
\]
}
\caption{Equational Rules for $\enau{E}$}\label{fig:enau_rules}
\end{figure}


%

\begin{example} 
\label{ex:first_ac}
    Consider the problem $P=\{ X: \lambda A. \ffac{f}(A,A,B)\triangleq \lambda B. \ffac{f}(A,B,A)\}$ and with $\nabla=\{A\#B\}$ and the  $\enau{E}$ derivation starting from  $P; \emptyset; \nabla;Id$. Applying $\arule{Abs}$ we get:
    $$ \{Y:  \ffac{f}(C,C,(A\ C)\cdot B)\triangleq \ffac{f}((B\ C)\cdot A,C,(B\ C)\cdot A)\};\emptyset; \Gamma;\{X\mapsto \lambda C. Y\}$$
     where  $\Gamma =\{A\# B, \allowbreak C\#\lambda A. \ffac{f}(A,A,B), C\#\lambda B. \ffac{f}(A,B,A) \}$.
     After the application of $\arule{DecAC},2\times\arule{DecAC},3\times \arule{SolAB}$ one the branches contains\footnote{Full computation in the Appendix, Example~\ref{ex:first_ac2}} the state:
    \[\small
     \emptyset; \left\{
        D:  C \triangleq (B\ C)\cdot A,
        D':  C \triangleq (B\ C)\cdot A,
        D'': (A\ C)\cdot B\triangleq C
    \right\}; \Gamma';\{X\mapsto \lambda C. \ffac{f}(D,D',D'')\}
    \]
     where $\Gamma'$ (omitted here) extends $\Gamma$ with the constraints for $D$, introduced by \anewrule{SolAB}.

    The application of $\arule{Mer}$ requires computation  $\eqvm(\{C\match (A\ C)\cdot B, (B\ C)\cdot A\match C\}, \Gamma' )$. The result is:
    $  \emptyset; \left\{
        D:  C \triangleq (B\ C)\cdot A
    \right\}; \Gamma'';\{X\mapsto \lambda C. \ffac{f}(D,D,\pi \cdot D)\}$,
    where  $\Gamma'' =\Gamma \cup \{D\#\lambda C.\lambda (B\ C)\cdot A.D, \; \allowbreak \pi\cdot D\# \lambda (A\ C)\cdot B.\lambda C.\pi\cdot D \}$.

    In \Cref{ex:first_equiv} we compute  $\eqvm(\{C\match (A\ C)\cdot B, (B\ C)\cdot A\match C\}, \Gamma' )=\swap{A}{C}\circ \swap{C}{B}$  which we denote as $\pi$ above. Note that $\pi\cdot D$ is a suspension, we only apply the permutation after instantiating $D$ according to the store.  The generalization obtained is the term-in-context $\tc{\Gamma'}{\lambda C. \ffac{f}(D,D,\pi \cdot D)}$. Note that this is one possible generalization in one branch of the generalization tree. A different branch may produce a different generalization.
\end{example}

\begin{example}[Different $\lggeq{A}$'s]
Consider the problem $X:\ffa{f}(\const{a},\const{b},g(Y))\triangleq \ffa{f}(\const{c},g(Z))$, with $\const{a,b}$ and $\const{c}$ constants.
By using rule $\arule{DecA}$  we  reach the state $\{X_1:\const{a}\triangleq \const{c}, X_2:\ffa{f}(\const{b},g(Y))\triangleq  g(Z)\};\emptyset;\emptyset;\{X\mapsto f(X_1,X_2)\}$ in one branch, and no further decomposition rule can be applied. This branch would give the generalization $f(X_1,X_2)$.
However, by grouping the arguments of $\ffa{f}$ in a different  way, we get the  state
$ \{X_1:\ffa{f}(\const{a,b})\triangleq \const{c}, X_2:g(Y)\triangleq  g(Z)\};\emptyset;\emptyset;\{X\mapsto f(X_1,X_2)\}$
in another branch. We can apply  $\arule{Dec}$ rule in the problem labelled by $X_2$, and this would give us a more specific generalizer $f(X_1, g(Y_1))$.

 The problem  $X:\ffa{f}(\const{a},h(Z), \const{b,b'}, g(X))\triangleq \ffa{f}(\const{c,d},h(Z_1),g(Z_2))$ is also interesting: in one branch the algorithm will compute an \A-generalizer with an occurrence of function symbol
 $h$, an expression like $\ffa{f}(X', h(X''), X''')$; in another branch an \A-generalizer with an occurrence of function symbol $g$, say  $\ffa{f}(X', Y', g(X''))$.

\end{example}

Associative decomposition will lead to some redundancies and the computed set of generalizations should be further minimized. This behaviour is similar to the first-order \A-anti-unification~\cite{DBLP:conf/jelia/AlpuenteEEM14} and HOP \A-anti-unification~\cite{DBLP:journals/mscs/CernaK20}.
\subsection{Correctness Results}\label{ssec:correctnesss}

All the results here are parametric on an  equational theory  $E$ that is either $\emptyset$ or \A, \C~or \AC. In the case $E=\emptyset$, all the results were verified in~\cite{DBLP:conf/fscd/Schmidt-Schauss22}, we will recall the proof ideas here, but for self-containedness, but we will give more details for the cases in which $E\neq \emptyset$.
\begin{theorem}[Termination]
$\enau{E}$ is terminating.
\end{theorem}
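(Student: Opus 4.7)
The plan is to prove termination by exhibiting a lexicographic well-founded measure on states $(P; S; \Gamma; \sigma)$ that strictly decreases along every transition of $\enau{E}$, independent of which non-deterministic branch is taken.

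First I would define a structural size $\mu$ on \nlat-terms as follows:
$\mu(W) = \mu(\pi \cdot X) = 1$;
$\mu(\lambda W.s) = 1 + \mu(s)$;
$\mu(f(s_1,\ldots,s_n)) = 1 + \sum_{i=1}^{n}\mu(s_i)$ for $f \in \Sigma_\emptyset \cup \Sigma_{\tt C}$;
and crucially $\mu(\ffa{f}(s_1,\ldots,s_n)) = n^2 + \sum_{i=1}^{n}\mu(s_i)$, and analogously for $\ffac{f}$.
The quadratic coefficient is tailored for associative decomposition: splitting a flat AC-node of arity $n$ into two nodes of arities $k$ and $n-k$ gives a net loss of $n^2 - k^2 - (n-k)^2 = 2k(n-k) > 0$ whenever $0 < k < n$. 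Observe that $\mu$ is invariant under pushing an atom-variable permutation through a term, since such pushing only rewrites suspensions (whose $\mu$-value is the constant $1$) without adding or removing structural symbols; so terms of the form $(W_1\ C)\cdot t$ appearing after \aoldrule{Abs} have $\mu((W_1\ C)\cdot t) = \mu(t)$.

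Next I would lift $\mu$ to a measure on the unsolved-problem component by $\Phi_1(P) = \sum_{X:\,t \,\triangleq\, s \,\in\, P}\bigl(\mu(t)+\mu(s)\bigr)$, and verify rule by rule that $\Phi_1(P)$ strictly decreases under every rule that modifies $P$.
\aoldrule{Dec}, \aoldrule{Abs}, and \anewrule{DecC} strip the outermost constructor on both sides and drop $\Phi_1$ by at least $2$; \anewrule{DecA} and \anewrule{DecAC} drop $\Phi_1$ by $2k(n-k) + 2l(m-l) > 0$ by the quadratic inequality above; and \aoldrule{SusAA}, \aoldrule{SusYY}, \anewrule{SolAB}, and \aoldrule{Sol} each remove a whole equation from $P$ (either discarding it or relocating it to the store $S$), so $\Phi_1$ strictly decreases by at least the $\mu$-size of that equation, which is at least $2$.

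The only remaining rule is \anewrule{Mer}, whose guard requires $P = \emptyset$, so $\Phi_1(P) = 0$ stays fixed across such a step; however, \anewrule{Mer} strictly removes one equation from $S$, so the second coordinate $|S|$ drops. Taking the lexicographic product $(\Phi_1(P), |S|)$ on $\mathbb{N}\times\mathbb{N}$, which is well-founded, every rule strictly decreases this measure, so $\enau{E}$ has no infinite derivation. The main obstacle is precisely the associative case: with the naive structural size an AC-split actually \emph{grows} the total (two new $\ffac{f}$-heads appear where there was one), so the quadratic coefficient $n^2$ is the key algebraic trick. A secondary point is that \anewrule{Mer} invokes $\eqvm$ as a side condition; we rely on the fact (proved separately in \Cref{sec:equivariance-new}) that $\eqvm$ itself terminates, which is orthogonal to the outer measure above.
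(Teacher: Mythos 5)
Your proof is correct and follows the same route the paper takes, namely a well-founded lexicographic measure under which every rule replaces the current problem by strictly smaller ones; the paper only sketches this ("standard\dots lexicographic ordering\dots smaller problems"), whereas you supply the concrete measure, including the quadratic weight $n^2$ on flattened \A/\AC-nodes that is genuinely needed to make the associative splits decrease (a naive term size would grow there), the observation that pushing permutations through terms is $\mu$-invariant, and the second component $|S|$ for \anewrule{Mer}. This is a faithful and complete filling-in of the paper's argument rather than a different one.
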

\begin{proof}
    The proof is standard and follows by analyzing the rules in \Cref{fig:nau_rules} and \Cref{fig:enau_rules}, accompanied by a lexicographic ordering. The equational rules induce a lot of branching, but after each step the initial problem is replaced by `smaller' problems.
\end{proof}
\begin{theorem}[Soundness]\label{theo:soundness}
Given terms-in-context $\tc{\nabla}{s}$ and $\tc{\nabla}{t}$. If there exists a derivation
$ \{X:t \triangleq s\};\emptyset;\emptyset;Id \naustep{+}\emptyset;S;\Gamma;\sigma$
 obtained by an execution of $\enau{E}$, then
$\tc{\Gamma}{X\sigma}$ is an $E$-generalization of
$\tc{\nabla}{ t}$ and $\tc{\nabla}{s}$.
\end{theorem}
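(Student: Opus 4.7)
The plan is to proceed by induction on the length of the $\enau{E}$-derivation, using a strengthened invariant that guarantees soundness at every intermediate state. For each reachable state $P; S; \Gamma; \sigma$, my proposed invariant is: for every interpretation $\rho$ satisfying $\nabla$, there exist interpretations $\rho_L$ and $\rho_R$ extending $\rho$ to the fresh generalization variables introduced so far, both respecting $\Gamma$, such that (i) $X\sigma\rho_L \approx_E t\rho$ and $X\sigma\rho_R \approx_E s\rho$, and (ii) each equation $Y : t' \triangleq s'$ in $P \cup S$ is witnessed on both sides, meaning $Y\rho_L \approx_E t'\rho_L$ and $Y\rho_R \approx_E s'\rho_R$. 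The invariant holds at the initial state by taking $X \mapsto t\rho$ on the left and $X \mapsto s\rho$ on the right; at a final state with $P = \emptyset$, condition (i) alone yields $\sem{\nabla}{t} \subseteq \sem{\Gamma}{X\sigma}$ and $\sem{\nabla}{s} \subseteq \sem{\Gamma}{X\sigma}$, which is precisely the generalization property demanded by the theorem.

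The inductive step proceeds by case analysis on the applied rule. For the rules shown in gray in~\Cref{fig:nau_rules}, I would cite the corresponding soundness arguments from~\cite{DBLP:conf/fscd/Schmidt-Schauss22} and focus the verification on the new and modified cases. For \anewrule{SolAB}, the added freshness constraint $C \# \lambda \pi_1 \cdot A. \lambda \pi_2 \cdot B. C$ semantically encodes the disjunction $C = \pi_1 \cdot A \vee C = \pi_2 \cdot B$ (cf.\ \Cref{ex:abstraction_constraint}); extending $\rho_L$ by $C \mapsto (\pi_1 \cdot A)\rho_L$ and $\rho_R$ by $C \mapsto (\pi_2 \cdot B)\rho_R$ keeps $\Gamma'$ consistent and witnesses the new stored equation on each side. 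For \anewrule{DecA}, \anewrule{DecC}, and \anewrule{DecAC}, soundness follows because the inference rules of \Cref{fig:equational_rules} exactly characterise $\approx_E$ for the respective function symbols: any associative regrouping or commutative reordering performed on the left-hand side can be matched on the right by the corresponding axiom instance, so the interpretations $\rho_L$ and $\rho_R$ transfer to the resulting subproblems without alteration.

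The main obstacle is the \anewrule{Mer} case. Its soundness rests on a key lemma about the algorithm \eqvm~defined in \Cref{sec:equivariance-new}: whenever $\eqvm(\{t_1 \match t_2, s_1 \match s_2\}, \Gamma) = \pi$, one has $\Gamma \vDash \pi \cdot t_1 \approx_E t_2$ and $\Gamma \vDash \pi \cdot s_1 \approx_E s_2$. Given this, the substitution extension $Z_2 \mapsto \pi \cdot Z_1$ preserves the invariant, since $Z_2\rho_L = (\pi \cdot Z_1)\rho_L = (\pi\rho_L) \cdot (Z_1\rho_L) \approx_E (\pi\rho_L) \cdot (t_1\rho_L) = (\pi \cdot t_1)\rho_L \approx_E t_2\rho_L$ using the invariant for $Z_1$ together with the lemma, and symmetrically under $\rho_R$. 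Establishing this contract for \eqvm~is delicate precisely because, as emphasised in the introduction and illustrated by \Cref{ex:first_ac}, the equivariance check must be semantic to account for the interaction between atom-variable valuations and equational reorderings of subexpressions. I therefore expect the whole soundness proof to reduce, after the per-rule bookkeeping above, to carefully verifying this semantic guarantee of \eqvm.
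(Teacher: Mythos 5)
Your proposal is correct and follows essentially the same route as the paper: an induction along the derivation maintaining an invariant that each state retains enough information to recover $t$ and $s$ from the current generalizer, with the \anewrule{Mer} case discharged by the separately-stated correctness of \eqvm. The paper phrases the invariant syntactically via two ``reversing'' substitutions $\sigma_1,\sigma_2$, whereas you phrase it semantically via the interpretation pairs $\rho_L,\rho_R$ respecting $\Gamma$; your version is a more detailed elaboration of the same argument that connects more directly to the subset-of-semantics definition of $\epreceq$, but it is not a different proof.
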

\begin{proof}
Let $P,S,\Gamma;\sigma \Longrightarrow P',S',\Gamma';\sigma'$ be a rule application of $\enau{E}$.
    By inspecting each rule it can easily be verified that the following invariants hold:
(i) The generalization variables of $P'$ and $S'$ are pairwise distinct;
(ii) After each rule application $P'\cup S'$ contains all the information, such that, two substitutions $\sigma_1$ and $\sigma_2$ can be obtained to ``reverse'' the rule application. I.e., by instantiating a generalization variable with $\sigma'$ followed by applying $\sigma_1$ or $\sigma_2$, respectively, we can obtain the original terms from $P\cup S$.

By transitivity follows that $(\emptyset, X\sigma)$ is an $E$-generalization.
\end{proof}

 \begin{example}[Weak Completeness]\label{ex:weak}
 For $P=\{ X: f(c_1,A) \triangleq f(c_2,A)\}$ and $\nabla=\emptyset$, the $\enau{\emptyset}$-algorithm terminates with configuration $\emptyset; \{Y_1: c_1\triangleq c_2\};\emptyset;\{X\mapsto f(Y_1,A)\}$. Note that $[f(a,a)]$ is in $ \sem{\emptyset}{f(Y_1, A}$, but it is neither in $\sem{\emptyset}{f(c_1,A)}$ nor in $\sem{\emptyset}{f(c_2,A)}$. It is a generalizer, but not the lgg: the lgg is $(\{A\#Y_1\},\allowbreak f(Y_1,A))$, and the added constraint forces that $Y_1$ and $A$ differ for every valid interpretation (cf. \Cref{ex:abstraction_constraint}).
 \end{example}

\begin{theorem}[Weak Completeness]
Given pairs $\tc{\nabla}{s}$ and $\tc{\nabla}{t}$. If $\tc{\Delta}{r}$ is an generalization of $\tc{\nabla}{t}$ and $\tc{\nabla}{s}$, then there exists a derivation
$\{X:t \triangleq s\};\emptyset;\emptyset;Id \naustep{+}\emptyset;S;\Gamma;\sigma$
obtained by an execution of $\enau{E}$ and a freshness context $\Gamma'$, such that $\tc{\Delta}{r}\epreceq \tc{\Gamma \cup \Gamma'}{X\sigma}$.
\end{theorem}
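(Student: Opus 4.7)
The plan is to proceed by induction, maintaining throughout any $\enau{E}$-derivation an invariant of the form: \emph{if the initial configuration is generalized by $\tc{\Delta}{r}$, then at each intermediate configuration $(P;\, S;\, \Gamma;\, \sigma)$ there exists a term-in-context $\tc{\Delta'}{r}$ (with $\tc{\Delta'}{r} \succeq_E \tc{\Delta}{r}$, i.e.\ differing only by extra freshness constraints on fresh generalization variables), a renaming of generalization variables, and substitutions $\tau_1,\tau_2$ that ``reverse'' the computation, so that the state witnesses $\tc{\Delta'}{r}$ as a generalizer of the current anti-unification equations in $P \cup S$}. Termination, together with the fact that the invariant at the final state $(\emptyset;\, S;\, \Gamma;\, \sigma)$ gives $\tc{\Delta}{r}\epreceq \tc{\Gamma\cup\Gamma'}{X\sigma}$ for $\Gamma'$ the freshness constraints on the auxiliary variables, then yields the conclusion.

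\textbf{Setting up the induction.} I would perform structural induction on the generalizer $r$, combined with a case analysis on the topmost equation $X\colon t\triangleq s$ in $P$. For each shape of $r$, the task is to exhibit a $\enau{E}$-rule whose application preserves the invariant:
\begin{itemize}
    \item If $r$ is a term-variable $Z$, apply $\arule{Sol}$ (or $\arule{SusYY}$/$\arule{SusAA}$/$\arule{SolAB}$ if $t,s$ are already suspensions and $Z$ is in fact equal or distinct under $\Delta$). Semantics of $\tc{\Delta}{r}$ guarantees the required witness substitutions $\tau_1,\tau_2$.
    \item If $r = f(r_1,\dots,r_m)$ with $f\in\Sigma_\emptyset$, then the generalizer condition forces $t$ and $s$ to have the same head symbol (by Definition~\ref{Def:semantics-term-in-context}), so $\arule{Dec}$ applies and IH gives generalizations of each $t_i\triangleq s_i$ by $r_i$.
    \item If $r=\lambda W.r'$, use $\arule{Abs}$; the invariant is preserved because $\tc{\Delta}{\lambda W.r'}$ being a generalizer of $\lambda W_1.t$ and $\lambda W_2.s$ is semantically equivalent, via swapping with a fresh $C$, to a generalization of $(W_1\,C)\cdot t$ and $(W_2\,C)\cdot s$.
    \item For $r=f(r_1,\dots,r_m)$ with $f\in\Sigma_\A\cup\Sigma_\C\cup\Sigma_\AC$, the generalizer condition permits the instances to be reached only after choosing a suitable split (for $\A$), a suitable order of the two arguments (for $\C$), or both (for $\AC$). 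The non-deterministic rules $\arule{DecA},\arule{DecC},\arule{DecAC}$ are designed to enumerate exactly these possibilities; one picks the split/permutation witnessed by the semantics of $\tc{\Delta}{r}$.
\end{itemize}

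\textbf{The merging phase.} Once $P=\emptyset$, the store $S$ may contain equations that should be collapsed because $\tc{\Delta}{r}$ actually uses the same subterm generalization in several positions. This is precisely when $\arule{Mer}$ applies, and it relies on $\eqvm(\cdot,\Gamma)$ returning a permutation $\pi$ under which $t_1\approx_E t_2$ and $s_1\approx_E s_2$. By soundness/completeness of \eqvm\ modulo $E$ (established in Section~\ref{sec:equivariance-new}), whenever $\tc{\Delta}{r}$ identifies two store entries, the algorithm can perform the corresponding merge step, which updates $\sigma$ with $Z_2\mapsto \pi\cdot Z_1$, matching what $r$ prescribes. The auxiliary freshness constraints on new generalization variables (introduced by $\arule{Abs}$, $\arule{SolAB}$, and $\arule{Mer}$) are collected into $\Gamma'$; these are exactly the constraints that may make $\tc{\Gamma\cup\Gamma'}{X\sigma}$ strictly less general than the pointwise view---allowed by the statement.

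\textbf{Main obstacle.} The hardest part is the equational case combined with the $\arule{Mer}$ step. Concretely, when $\tc{\Delta}{r}$ uses the \AC- or \A-structure of $r$ to capture a witness that corresponds to a non-trivial permutation/split of the flattened arguments in $t$ and $s$, one must argue that some branch of $\arule{DecAC}$/$\arule{DecA}$ mirrors this witness and, crucially, that later merges remain feasible: two store entries that $r$ identifies must still be recognized as equivariant by \eqvm\ modulo $E$, despite the earlier commitments to atom-variable permutations and flattening choices. This requires carefully threading through the derivation a canonical reindexing that aligns the decomposition choices with the positions of $r$, so that the equivariance check the algorithm performs at the merge step succeeds in the same branch. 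The remainder of the proof is a straightforward combinatorial verification that Example~\ref{ex:weak}-style gaps in freshness are accounted for by the additional context $\Gamma'$, and that no genuine loss of generality occurs beyond such freshness weakening.
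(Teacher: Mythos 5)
Your proposal follows essentially the same route as the paper's own argument: induction on the structure of the generalizer $r$ with a case analysis on the applicable $\enau{E}$ rule, relying on the fact that the non-deterministic equational decomposition rules enumerate all splits/permutations so that some branch mirrors the witness, and absorbing the missing freshness information into the auxiliary context $\Gamma'$ (which is exactly why only weak completeness is claimed). Your write-up is in fact considerably more detailed than the paper's proof, which is only a short sketch deferring to the $E=\emptyset$ case of the earlier work; the invariant you state and your explicit treatment of the \arule{Mer}/\eqvm{} interaction are consistent with, and a reasonable elaboration of, what the paper asserts.
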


\begin{proof}
    The proof of completeness is by induction on the structure $r$, and follows by analyzing the rule of $\enau{E}$ applied in the problem.  The proof follows the same lines of the proof in~\cite{DBLP:conf/fscd/Schmidt-Schauss22}. Additional cases have to be considered when the terms in the problem are headed with function symbols $f^E$ with equational theories. Since the rules in Figure~\ref{fig:enau_rules} expand all possible branches, completeness up to freshness context is guaranteed.
\end{proof}


It is a consequence of the weak completeness that the process of computing the freshness constraints is not sufficient for guaranteeing that the generalizer is the least general one. Since a postprocessing that refines the freshness constraints but keeping the generalizer property
does not change the uniqueness.
%
%
 As expected the equational theories will induce  branching when decomposing the specific function symbols, but the possible $\lggeq{E}$'s will be within the finite execution tree of $\enau{E}$, after, possibly, adding some missing  constraints, due to weak completeness.
\begin{theorem}\label{thm:type-aunif}
   $\enau{\emptyset}$ is unitary and $\enau{\tt A}, \enau{\tt C}$ and $\enau{\tt AC}$  are finitary.
\end{theorem}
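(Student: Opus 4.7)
The plan is to combine the termination theorem with an analysis of how each rule contributes to the non-determinism of derivations, then argue that a minimal complete set can be extracted in a post-processing step.

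\textbf{Unitary case ($E=\emptyset$).} The plan is to inspect the rules in \Cref{fig:nau_rules} and observe that none of them branches: for each anti-unification equation, at most one rule is applicable (the applicability conditions are mutually exclusive, and the choice of fresh generalization variables is immaterial up to renaming). Hence every terminating derivation from $\{X:t\triangleq s\};\emptyset;\nabla;\mathit{Id}$ yields the same computed generalization $\tc{\Gamma}{X\sigma}$ up to renaming of generalization variables and the order in which equations are processed. By soundness this is a generalization, and by weak completeness every other generalization $\tc{\Delta}{r}$ satisfies $\tc{\Delta}{r}\epreceq \tc{\Gamma\cup\Gamma'}{X\sigma}$ for some additional freshness context $\Gamma'$. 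Since adding freshness constraints to $\Gamma$ only restricts interpretations, it refines but does not change the equivalence class of the generalizer, so the computed term-in-context is the unique lgg modulo $\epreceq$-equivalence.

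\textbf{Finitary cases ($E\in\{\A,\C,\AC\}$).} Here the plan is to bound the branching of the three new rules in \Cref{fig:enau_rules}. Each application of \anewrule{DecA} chooses a pair of split points $(k,l)$ with $1\le k<n$ and $1\le l<m$, giving at most $(n-1)(m-1)$ branches; \anewrule{DecC} branches into at most $2$ successors; and \anewrule{DecAC} chooses two permutations, giving at most $n!\,m!$ branches. All of these bounds are finite because the input terms have finite size, and after each decomposition the multiset of problem sizes strictly decreases in the termination ordering. Combined with termination, the full derivation tree of $\enau{E}$ starting from any input is finitely branching and of bounded depth, hence finite. Therefore the set $G$ of all computed generalizations $\tc{\Gamma_i}{X\sigma_i}$, collected over all derivation paths, is finite.

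\textbf{Extracting the minimal complete set.} By soundness every element of $G$ is a generalization, and by weak completeness every generalization is, after possibly refining the freshness context, below some element of $G$ w.r.t.\ $\epreceq$. As noted after the weak completeness theorem, this refinement does not affect which elements are maximal under $\epreceq$. The plan is to apply the standard post-processing: refine contexts as needed to ensure completeness, then discard any element that is strictly more general than another. Because $G$ is finite and $\epreceq$ is a preorder, this yields a finite minimal complete set $\mcsg_E((\nabla,s),(\nabla,t))$, establishing that $\enau{\tt A}$, $\enau{\tt C}$ and $\enau{\tt AC}$ are finitary.

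The main obstacle is verifying that the post-processing step genuinely produces a complete set: one must check that refining the freshness context of an element of $G$ does not move it outside the set of computed generalizers and that comparing two elements under $\epreceq$ respects such refinements. Both facts follow from the soundness invariants and the semantic definition of $\epreceq$ via $\sem{\cdot}{\cdot}$, so the argument reduces to the finiteness bounds on branching described above.
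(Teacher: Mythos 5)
Your proposal follows essentially the same route as the paper: for $E=\emptyset$ the rules do not branch, so the term structure of the computed generalizer is unique, and uniqueness of the lgg then comes from a canonical post-processing that maximizes the freshness context; for $E\in\{\A,\C,\AC\}$ the equational decomposition rules branch only finitely, so the finite execution tree yields a finite complete set from which a minimal one is extracted. One caution: your sentence claiming the computed term-in-context is itself the unique lgg overstates the situation (cf.\ \Cref{ex:weak}, where the computed generalizer is strictly more general than the lgg until the constraint $A\#Y_1$ is added), but your final paragraph on post-processing already supplies the missing refinement step, so the overall argument stands.
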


\begin{proof}
    The format of the rules for $\enau{\emptyset}$ influences the structure of computed generalizers: note that the expression without considering freshness constraints is unique. The algorithm $\enau{\emptyset}$ computes a unique term-generalizer.
\end{proof}
Similarly to~\cite{DBLP:conf/fscd/Schmidt-Schauss22}, completeness can be obtained after a post-processing step using more general forms of constraints that include information of equivalence relations assigning values to atom-variables. A brief discussion and partial extensions are available in \Cref{app:completeness} and is omitted here due to space constraints.

\section{Equivariance Algorithm for \nlat~modulo $E$}\label{sec:equivariance-new}

In this section, we introduce \eqvm, a novel algorithm for solving equivariance problems in \nlat, specifically for handling equivariant (matching) equations of the form \( t \match s \). Unlike the equivariance algorithm developed for the empty theory in \cite{DBLP:conf/fscd/Schmidt-Schauss22}, \eqvm~accounts for the equational theories considered in this work. Handling \(E\) is nontrivial, as atom-variable permutations interact with equational theories in unexpected ways (cf. Example~\ref{ex:equivariance_e}), requiring additional tests and refinements.



\subsection{Preliminaries for \eqvm}
In the merging rule \anewrule{Mer} of $\enau{E}$ we need to decide the {\em equivariance problem modulo $E$}:
Given a set $\Psi$ of equivariant equations of the form $t\match s$
between  \nlat-terms $t$ and $s$,  and a freshness context $\nabla$, find (a representation of) an injective function 
$\pi$ such that $\nabla \models \pi\permef t\approx_E s$, for all $t\match s\in \Psi$.

\begin{theorem}\label{theorem:matching-mod-E-decidable-new}
   The question, whether for two given expressions $s,t$ and freshness constraint $\nabla$, there exists a function 
   $\pi$ such that
$\nabla \models \pi \cdot  s \approx_E t$ is algorithmically decidable. Its complexity has a doubly exponential upper bound.
\end{theorem}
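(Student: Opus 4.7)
The plan is to establish decidability by reducing the existential over $\pi$ and the universal over interpretations to two finite enumerations, each followed by a decidable ground check. Let $V = \AV(s) \cup \AV(t) \cup \AV(\nabla)$. Since $\pi$ is required to be injective and only its action on atom-variables whose interpretations appear in $s$, $t$, or $\nabla$ can influence the judgement, I would argue that it suffices to consider injective maps $\pi$ with support inside $V$, ranging over a finite codomain $V' \supseteq V$ obtained by adding at most $|V|$ fresh atom-variables. Up to renaming of atoms outside $V \cup V'$, there are at most $(|V|+|V'|)!$ such candidates, giving a singly exponential enumeration of $\pi$.

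For each candidate $\pi$, checking $\nabla \models \pi \cdot s \approx_E t$ is, by Definition~\ref{def:semantics_judgements}, a universal statement over all interpretations $\rho$ respecting $\nabla$. The key reduction is that only the equivalence relation $\sim_\rho$ induced by $\rho$ on $V \cup V'$ (which atom-variables get mapped to the same concrete atom) affects the outcome. Thus the universal quantification over $\rho$ collapses to a universal quantification over equivalence relations $\sim$ on $V \cup V'$ that are compatible with $\nabla$, i.e., that do not collapse any pair forbidden by $\nabla$. The number of such equivalence relations is bounded by the Bell number of $|V|+|V'|$, which is again singly exponential.

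For a fixed pair $(\pi, \sim)$, I would quotient the atom-variables according to $\sim$, replacing each equivalence class by a distinct ground atom, obtaining ground terms $\bar s, \bar t \in \grnlang$ and a ground permutation $\bar\pi$; the residual check is $\bar\pi \cdot \bar s \approx_E \bar t$. For $E \in \{\A, \C, \AC\}$ this is decidable in polynomial time using flattening and multiset canonicalization, so the overall procedure is correct and terminates. Counting the two single-exponential enumerations multiplied by a polynomial-time check yields a doubly exponential upper bound, as claimed.

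The main obstacle I anticipate is making the reduction to ground problems rigorous in the presence of \AC-symbols. A single choice of $\pi$ does not determine the matching of subterms at an \AC-node: the correct pairing of $\pi$-images of arguments of $\pi \cdot s$ with arguments of $t$ may itself depend on which atom-variables collapse under $\sim$. The argument must therefore show that it suffices to enumerate $(\pi, \sim)$ independently and delegate any remaining \AC-pairing to the ground $E$-equivalence check, rather than intertwine these choices. This is precisely the delicate interaction between atom-variable permutations and equational reasoning emphasized earlier in the paper, and handling it cleanly is what justifies packaging the verification into a single decidable step per pair $(\pi, \sim)$.
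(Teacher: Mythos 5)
There is a genuine gap in your enumeration of the witness $\pi$. You fix a single injective map on atom-variable \emph{names}, uniform across all interpretations, and only afterwards quantify over the equivalence relations $\sim$ induced by $\rho$. But the object the theorem asks for is (a representation of) an \nlat-permutation, i.e.\ a list of swappings of atom-variable suspensions, and the concrete permutation $\pi\rho$ that such an object denotes \emph{varies with} $\sim$. A uniform name-map cannot express this. Concretely, take $s = D$, $t = (A~B)(B~C)\cdot D$ and $\nabla=\emptyset$: the \nlat-permutation $(A~B)(B~C)$ solves the problem, but under interpretations with $A,B,C,D$ pairwise distinct the image of $D$ must be $D\rho$, while under interpretations with $D\rho=C\rho$ it must be $A\rho$; no single atom-variable can serve as the image of $D$ in both cases, so your procedure rejects a solvable instance. (Such suspensions do occur in the intended application; see the store entries like $C\triangleq (B~C)\cdot A$ arising from the \anewrule{Mer} rule.) The paper's route is precisely to enumerate the \emph{semantically distinct} \nlat-permutations, each represented as a tuple $(\pi_1,\dots,\pi_k)$ of concrete permutations indexed by the $k=B_n$ equivalence relations on the $n$ atom-variables; that search space has size about $(n!)^{B_n}$, which is where the doubly exponential bound genuinely comes from. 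Your own arithmetic is the tell-tale sign: the product of two singly exponential enumerations is singly exponential, so if a uniform name-map sufficed the problem would be in single-exponential time, contradicting the intended analysis.

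Two smaller issues in the same step: for a fixed name-map $\pi$ and a relation $\sim$, the induced map $\bar\pi$ on the quotient need not be well defined (if $A\sim B$ but $\pi(A)\not\sim\pi(B)$) nor injective (if $A\not\sim B$ but $\pi(A)\sim\pi(B)$); these cases must be handled explicitly rather than silently producing ``a ground permutation.'' The remainder of your argument --- collapsing the universal over $\rho$ to the finitely many $\nabla$-compatible equivalence relations, and deciding the residual ground $\approx_E$ check for $E\in\{\A,\C,\AC\}$ by flattening and multiset comparison --- is sound and matches the paper's intent; it is only the existential over $\pi$ that must range over interpretation-indexed families of permutations (realizable as a single syntactic \nlat-permutation or as a map into suspensions) rather than over uniform renamings.
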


We give a strategy to check equivariance by first splitting the freshness context $\nabla$ into several simple freshness constraints and then
processing them.

\begin{definition}
  A simple freshness context is a set of (simple) freshness constraints of the forms $A\#B$, $A \# X$, or $A \# \lambda B. A$, where $A,B$ are atom-variables and $X$ is a generalization-variable. For simplicity we abbreviate the latter by $A = B$,
\end{definition}

\begin{lemma}\label{lemma:simple-freshness-contexts}
    Every (general) freshness context is equivalent to a disjunction of  simple freshness contexts.
\end{lemma}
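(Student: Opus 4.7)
The plan is structural induction on the term \(s\) appearing in each atomic freshness constraint \(A\#s\) of the context \(\nabla\). Since a freshness context is a conjunction of such atomic constraints, and semantic equivalence is preserved by conjunction and by distributing conjunctions over disjunctions, it suffices to prove the claim for a single constraint \(A\#s\): once each \(A_i\#s_i \in \nabla\) is rewritten as a disjunction of conjunctions of simple constraints, we distribute through the outer conjunction to obtain the required global disjunction of simple freshness contexts.

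The base and easy cases of the induction are straightforward. If \(s\) is a bare atom-variable \(B\), then \(A\#B\) is already simple. If \(s = f(s_1,\ldots,s_n)\), the derivation rules of \(\#\) yield the equivalence \(A\#s \equiv \bigwedge_i A\#s_i\), so the inductive hypothesis applied to each \(s_i\) together with distribution closes the case. The abstraction case, \(s = \lambda W.t\), is where disjunction genuinely appears: from the two rules for freshness at abstractions one obtains, reading them semantically via Definition~\ref{def:sem_rel}, the equivalence \(A\#\lambda W.t \;\equiv\; (A = W)\;\vee\;(A\#t)\), where the first disjunct is (after resolving the suspension on \(W\) to an atom-variable, see below) a simple constraint of the abbreviated form, and the second disjunct is handled by the inductive hypothesis on \(t\).

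The real work is the suspension cases \(s = \pi\cdot X\) and \(s = \pi\cdot B\). Using equivariance, \(A\#\pi\cdot X\) is equivalent to \((\pi^{-1}{\cdot}A)\# X\), and \(A\#\pi\cdot B\) is equivalent to \((\pi^{-1}{\cdot}A)\# B\); so it suffices to establish the auxiliary claim that any expression \(\tau\cdot C\) (with \(\tau\) a list of swappings of atom-variable suspensions and \(C\) an atom-variable) is equivalent, under a disjunction of simple side-conditions, to a single atom-variable. I would prove this by an inner induction on the length of \(\tau\): writing \(\tau = (W_1\ V_1)\circ\tau'\), first reduce \(\tau'\cdot C\) to an atom-variable \(D\) on each branch, then perform the outer swap by a three-way case split on whether \(D = W_1\), \(D = V_1\) (with \(D\neq W_1\)), or \(D\) is distinct from both. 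Each hypothesis is expressible as a conjunction of simple constraints of the form \(D\#W_1\), \(D\#V_1\), or their \(=\) analogues; and since \(W_1, V_1\) are themselves suspensions of the form \(\pi_i\cdot A_i\), the same reduction is applied recursively to them. Termination is immediate because the nested permutations are shorter at each step and only finitely many atom-variables are involved.

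The main obstacle is precisely this auxiliary lemma: a swapping like \((W_1\ V_1)\) behaves differently according to whether its argument semantically coincides with \(W_1\) or \(V_1\), but these coincidences are themselves statements about suspensions of other atom-variables. The bookkeeping must ensure two things: (i) each disjunct is \emph{sound}, i.e., together with the assumed equalities/disequalities it indeed entails the original constraint; and (ii) the disjuncts together are \emph{exhaustive}, covering every interpretation \(\rho\), so that the overall disjunction is genuinely equivalent and not merely implied. Once this auxiliary reduction is established, it slots into the suspension case of the main induction, and the remaining cases are routine; the statement of the lemma then follows by collecting, for the full context \(\nabla\), all resulting disjuncts and pushing conjunctions inwards.
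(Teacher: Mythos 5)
Your proof is correct and follows essentially the same route as the paper's: strip function applications into conjunctions, turn each abstraction into a disjunction $(A = W) \vee (A\#t)$, and distribute conjunctions over disjunctions to obtain a disjunction of simple contexts. The one place you go beyond the paper is the suspension case: the paper's proof jumps directly from ``no function symbols'' to constraints of the form $A\#\lambda A_1.\ldots\lambda A_n.t'$ with $t'$ an atom-variable or generalization variable, silently assuming that the permutations decorating binders and variables ($\lambda \pi\cdot A.s$, $\pi\cdot X$, $\pi\cdot B$) have already been resolved. Your auxiliary claim --- that $\tau\cdot C$ reduces to a single atom-variable under an exhaustive case split on equalities and disequalities, each branch expressible by simple constraints, with termination by induction on the length of the nested permutations --- makes that step explicit and is exactly what is needed to justify it; your insistence on both soundness and exhaustiveness of the branches is the right correctness criterion for the equivalence. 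So this is the same decomposition, carried out at a finer level of detail where the paper is terse.
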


\begin{proof}
   Freshness constraints  $A\#t$ where $t$ may be any expression can be replaced by a conjunction of (simpler)  freshness constraints $A\#t_1 \wedge \ldots \wedge A\#t_n$, where the $t_i$ do not contain function symbols. The latter constraints are  of the form $A\# \lambda A_1.\lambda A_2. \ldots \lambda A_n.t_i'$,
   where $t_i'$ is either an atom variable $B$ or a generalization variable $X$. The  first form can be replaced by a disjunction $A = A_1 \vee \ldots \vee A = A_n \vee A\#t$, where $t$ is $B$ or $X$.  Using propositional rules this results in a disjunction of simple freshness contexts.
\end{proof}

\subsection{The  method for checking equivariance of  expressions.}

First, we introduce the decomposition rules of the \eqvm~algorithm are in Figure~\ref{fig:eqvm-simple-freshness-constraints-rules-new}. Second, we present the \eqvm~algorithm (Algorithm~\ref{algo:eqvm}), which incorporates additional steps to compute a solution to our problem.

\subparagraph*{Description of the rules of Figure~\ref{fig:eqvm-simple-freshness-constraints-rules-new}}

\begin{itemize}
    \item Rule $\anewrule{EqAbs}$ applies in equivariant equations between abstractions. That is, to find $\pi$ such that $ \nabla\vDash \pi\cdot ( \lambda W_1.t)\approx_E \lambda W_2.s$. We use the fact that $\nabla \vDash (\lambda W_1.t \approx_E \lambda W_2. s)$ if  $\nabla \vDash (W_1 \ A) \cdot t \approx_E (W_2\ A)\cdot  s$, for a new atom-variable $A$ (i.e. two abstractions are renamings of each other if they can be both renamed using new atom-variable and they continue the same).  Thus, the problem reduces to finding $\pi$ such that   $\nabla \vDash  \pi \cdot (W_1 \ A) \cdot t \approx_E (W_2\ A)\cdot  s$ holds. This corresponds to the problem $(W_1~A)\cdot {t} \match (W_2~A)\cdot {s}$. We add the constraint $A\match A$, as the final $\pi$ shouldn't care about atom-variables introduced during the computation as they do not appear in the input problem.
    \item Rule $\anewrule{Eqf}$ applies in equivariant equations between terms headed with a function symbol, say $f$. To illustrate, let us consider the case with $k=3$. Thus, to find $\pi$ such that $\nabla \vDash \pi\cdot f(t_1,t_2,t_3)\approx_E f(s_1,s_2,s_3)$. Since $\pi$ distributes over $f$, this problem reduces to $\nabla \vDash f( \pi\cdot t_1, \pi\cdot t_2, \pi\cdot t_3)\approx_E f(s_1,s_2,s_3)$, which can be decomposed to three smaller problems $\nabla \vDash \pi\cdot t_i\approx_E s_i$, for $i=1,2,3$. Note that application of $\pi$ does not alter the structure of the term. We can use a similar rule in the case both terms are headed with an associative function symbol $\ffa{f}$ as long as that the terms are flattened w.r.t. $\ffa{f}$.
    \item Rule $\anewrule{EqfC}$ applies in equivariant equations between terms headed with a commutative function symbol $\ffc{f}$. Thus, to find $\pi$ such that $\nabla \vDash \pi\cdot \ffc{f}(t_1,t_2)\approx_E \ffc{f}(s_1,s_2)$. This rule branche due to commutativity, and the problem reduces to find $\pi$ such that $\nabla \vDash \pi\cdot t_1\approx_E s_1$ and $\nabla \vDash \pi\cdot t_2\approx_E s_2$ in one branch, but a different solution could be found in the other branch $\nabla \vDash \pi\cdot t_1\approx_E s_2$ and $\nabla \vDash \pi\cdot t_2\approx_E s_1$. Note that, with nested occurrences of commutative function symbols in a term, this problem has an upper bound of $2^n$ possible solutions, for some $n$.
    \item Rule $\anewrule{EqfAC}$ applies in equivariant equations between terms headed with an associative-commutative function symbol $\ffac{f}$.  The explanation is similar to the one above, except that there is more branching. Here we also assume that the terms are flattened w.r.t. $\ffac{f}$.
\end{itemize}

 The following  example illustrates the interaction between the equivariance problem on atom-variables and the equational theories in $E$.
\begin{example}\label{ex:equivariance_e}
   Consider the equivariance equation $\ffac{f}((B\ C) \cdot A,C,B)\match \ffac{f}(C,A,A)$ and freshness context $\nabla=\{A\#\lambda B.A\}$. The restriction from $\nabla$, that $A=B$ (cf. Example~\ref{ex:abstraction_constraint}), reduces the initial problem to $\ffac{f}((A\ C) \cdot A,C,A)\match \ffac{f}(C,A,A)$ which is equivalent to $\ffac{f}(C,C,A)\match \ffac{f}(C,A,A)$. It is easy to see that $\pi=(A\ C)$ is the unique solution:

   The goal  of \eqvm~is to find a permutation $\pi$ such that $\nabla\vDash\pi\cdot \ffac{f}((B\ C) \cdot A,C,B)\approx_E \ffac{f}(C,A,A)$ is valid, which is equivalent to checking validity of $ \nabla\vDash \ffac{f}((\pi\circ (B\ C)) \cdot A,\pi\cdot C,\pi\cdot B)\approx_E \ffac{f}(C,A,A)$. Note that taking $\pi=(A\ C)$ reduces the problem to checking whether:
   $\nabla\vDash \ffac{f}(((A\ C)\circ (B\ C)) \cdot A,A, (A \ C)\cdot B)\approx_E \ffac{f}(C,A,A)$
   holds for every interpretation of $A,B$ and $C$. Note that $\nabla$ is equivalent to $A=B$ which constraints the interpretation of $A$ and $B$ to always coincide  while allowing unrestricted interpretations of $C$ to concrete atoms.  Say the interpretations of $A,B,C$ are  $a,a$ and $c$, resp. Now the problem reduces to check to the derivability of
   $ \ffac{f}(a,a, c)\approx_E \ffac{f}(c,a,a)$, which is derivable using rule ${\tt (AC)}$ in Figure~\ref{fig:equational_rules}. In more complex examples we might obtain a different permutation in different branches.
\end{example}

\begin{figure}[!t]
    \centering

\hrule
\smallskip

\begin{mathpar}
%
\inferrule*[LEFT=\anewrule{EqAbs}]{\{\abst{W_1}{t} \match \abst{W_2}{s}\}\dotcup \Psi;\nabla  \\
}{
   \{(W_1~A)\cdot {t} \match (W_2~A)\cdot {s}, A\match A\}\dotcup \Psi; \{A \# B, A \# X|~A~\text{is fresh}; B,X \text{ occurring in the input}  \} \cup\nabla;
   }
\\
\and
\inferrule*[LEFT=\anewrule{Eqf}]{ \{f\ttup{t}{k} \match f\ttup{s}{k}\}\dotcup \Psi;\nabla ~~~~\ (\text{same for } \ffa{f})}{
      \cup_{j=1}^k\{t_j\match s_j\}\cup \Psi;\nabla; \Pi }
\and
\inferrule*[LEFT=\anewrule{EqfC}]{\{\ffc{f}(t_1,t_2) \match \ffc{f}(s_1,s_2)\}\dotcup \Psi;\nabla }{
      \{t_1\match s_{i}, t_2 \match s_{3-i}\}\cup \Psi;\nabla}
\and
\inferrule*[LEFT=\anewrule{EqfAC}]{\{\ffac{f}\ttup{t}{n}\match \ffac{f}\ttup{s}{n}\}\dotcup \Psi;\nabla}{ \{t_1\match s_i, \ffac{f}(t_2,\ldots, t_n) \match \ffac{f}(s_1,\ldots, s_{i-1}, s_{i+1}, s_{n})\}\cup \Psi;\nabla}
\end{mathpar}

   \hrule \vspace{1mm}
  \caption{ Decomposition Rules for {\eqvm} with simple freshness context. (n.d.)}
    \label{fig:eqvm-simple-freshness-constraints-rules-new}
\end{figure}

The strategy for checking equivariance involves applying the (n.d.) decomposition to the equivariance equations and then examining the simplified set to construct an injective mapping. This is achieved by systematically testing all possible outcomes of the decomposition. We use Lemma~\ref{lemma:simple-freshness-contexts} to simplify the freshness context of the initial problem.

\begin{myalgorithm}[\eqvm]\label{algo:eqvm} \quad \\
{\bf Input:} $\Psi$ a set of equivariance equations and $\nabla$ a simple freshness context. \\
{\bf Output:} an injective mapping $\Pi$.\\
{\bf 1:} First apply the (non-deterministic, n.d.) algorithm for decomposing in \Cref{fig:eqvm-simple-freshness-constraints-rules-new}.\\
{\bf 2:} Find one result $(S, \nabla)$ of the derivation tree obtained from the application of the rules in \Cref{fig:eqvm-simple-freshness-constraints-rules-new}  that passes the  following construction and tests: \\
{\bf 3:}  Compute  a set of mapping relations 
     $\Pi$ from $S$ as follows: \\
{\bf 4:}   If there is a matching equation $\pi_{i,1}\cdot  X \match  \pi_{i,2}\cdot  Y$ in $S$ then {\bf fail}. \\ 
{\bf 5:}  {\bf While} there is a matching relation between atom-variables $\pi_{i,1}\cdot  A_i  \match   \pi_{i,2} \cdot  B_i$ in $S$ {\bf do} :\\
{\bf 6:}\quad  If all atom variables in $\pi_{i,1}$ are already mapped in the current $\Pi$, but $A_i$ is missing {\bf then}\\
{\bf 7:} \quad define $A_i \mapsto  (\Pi(\pi_{i,1}^{-1}) \circ \pi_{i,2})\cdot B_i$. \\
{\bf 8:}  Else,
          guess a mapping  for the atom-variable
          and repeat. \\
{\bf 9:} {\bf EndWhile}\\
{\bf 10:}    {\bf For each}  matching equation $\pi_{i,1}\cdot  X \match  \pi_{i,2}\cdot  X$ in $S$ {\bf do}:\\
{\bf 11:} Test whether $\nabla \models  \Pi(\pi_{i,1})\cdot X \approx_E \pi_{i,2}\cdot X$.\\
{\bf 12:} ~If there is one pair for which the test fail, return  {\bf fail}.\\
{\bf 13:} ~{\bf EndFor} \hspace{1.5cm} \textcolor{gray}{\% after all pairs are processed, test correctness and injetivity of $\Pi$.}\\
{\bf 15:}  For each pair of  definitions $A_1 \mapsto \pi_{1}\cdot B_1$  and
    $A_2 \mapsto \pi_2\cdot B_2$, the following must hold: \\
{\bf 16:} \qquad
     $\nabla \models  A_1  \approx_E A_2$ if and only if $\nabla \models \pi_1\cdot B_1  \approx_E \pi_2\cdot B_2$.\\
{\bf 17:} In case of success, return $\Pi$ that consists of the defined mappings.

\end{myalgorithm}

Some explanations are in order: In step 4, the algorithm fails if it encounters a problem of the form $\pi_{i,1}\cdot  X \match  \pi_{i,2}\cdot  Y$ as solving it would require finding a $\pi$ such that $\nabla\vDash \pi \cdot (\pi_{i,1}\cdot  X)\approx_E \pi_{i,2}\cdot  Y$. Here, $\pi,\pi_{i,1}$ and $\pi_{i,2}$ are atom-variable permutations, and in the semantics they will be instantiated to concrete atoms, whereas $X,Y$ will be instantiated to ground terms. By Definition~\ref{def:semantics_judgements}, the judgement must hold for all valuations of $X$ and $Y$, and they are independent from each other. The guess in step 8 might be guided by some technique, for e.g., a complete case distinction of equalities between the atom-variables underlying the construction in \cref{theorem:matching-mod-E-decidable-new}. The following examples illustrate the steps of the algorithm~\eqvm.


\begin{example}[Cont. \Cref{ex:first_ac}]\label{ex:first_equiv}
    From the application of \anewrule{Mer} it remains to solve the equivariance problem
    $\eqvm(\{C\match (A\ C)\cdot B, (B\ C)\cdot A\match C\}, \Gamma_1 ), $
    where $\Gamma_1 =\{A\# B,  C\#\lambda A. \ffac{f}(A,A,B), C\#\lambda B. \ffac{f}(A,B,A) \}$ is a subset of $\Gamma'$ without the constraints for $D$ which are irrelevant here:

 \begin{enumerate}
     \item First we need to simplify the context $\Gamma_1$. Note that

    $ C\#\lambda A. \ffac{f}(A,A,B)\iff A=C \vee (A\neq C\wedge C\#B)$

     $ C\#\lambda B. \ffac{f}(A,B,A)\iff C=B \vee (C\neq B\wedge C\#A)$

    Since $A\#B$ is part of the freshness context,
       there is only one possible simpler freshness context since the other cases produce unsatisfiable contexts:  $\Gamma_2 =\{A\#B,A\# C,C\#B\}$, which means that all instantiations of $A,B,C$ are pairwise different.

   \item  We now compute $\eqvm(\{C\match (A\ C)\cdot B, (B\ C)\cdot A\match C\}, \Gamma_2)$:

   No decompositon rule can be applied. In step 5, and focusing on the problem $C\match (A\ C)\cdot B$, we use information from $\Gamma_2$ to simplify it. The `guess' is guided by the following reasoning: Since $A\# B$ and $A\#C$ ensure that  these atoms-variables to have distinct interpretations, we conclude that $(A\ C)\cdot B \equiv B$. This reduces the problem to
   $\{C\match B, (B\ C)\cdot A\match C\}, \Gamma_2$. We repeat step 5, on the equation $(B\ C)\cdot A\match C$, we can use the information in $\Gamma_2$ again, and conclude that $(B\ C)\cdot A\equiv A$, which reduces the problem to $\{C\match  B, A\match C\}; \Gamma_2$.

       The injective mapping for equivariant match is: $\Pi=\{A\mapsto C, C\mapsto B\}$.
 \end{enumerate}
\end{example}

\begin{example}[$\eqvm$ with abstraction and \AC-operator]\label{ex-ac-algo-version}
  Consider the problem with input $\{X:g(\lambda C.\ffac{f}(A,B,C),\lambda A.\ffac{f}(A,B,A))\triangleq g(c, c)\}$ and context $\nabla=\{C \# \lambda A. \lambda B.C\}$.
  Let us check the run of $\enau{E}$:
\begin{enumerate}
\item The first step leads to the following state:\\
  $\{X_1: \lambda C. \ffac{f}(A,B,C) \triangleq c, X_2: \lambda A. \ffac{f}(A,B,A) \triangleq c\}; \emptyset; \nabla;\{X\mapsto g(X_1,X_2)\}$. 

\item  After application of $\arule{Sol}$, we have to try  a  \anewrule{Mer} step, which requires to solve an equivariance match of the expressions:
  $\eqvm(\{\lambda C. \ffac{f}(A,B,C) \match  \lambda A. \ffac{f}(A,B,A)\}, \nabla)$ 

\item We apply the rule \arule{EqAbs} with atom-variable $D$ and obtain:
$$\{\ffac{f}(A,B,D) \match \ffac{f}(D,B,D), D \match D\}; \nabla \cup\{D\#A, D\#B\}$$
In this case, due to the addition of the extra freshness constraints, we do not have to deal with the AC-property of $\ffac{f}$ and obtain $\Pi = \{A \mapsto D\}$. 
\end{enumerate}
\end{example}

\begin{theorem} The following hold:
\begin{enumerate}
\item \eqvm~terminates.
\item \eqvm~is correct.
\end{enumerate}
\end{theorem}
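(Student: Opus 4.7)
The plan is to treat termination and correctness separately, with correctness further decomposed into soundness and (weak) completeness relative to the semantic judgement $\nabla \vDash \pi \cdot t \approx_E s$.

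\textbf{Termination.} I would define a well-founded complexity measure on configurations $(\Psi;\nabla)$: namely, the multiset of pairs $(|t|,|s|)$ for equations $t \match s \in \Psi$, ordered lexicographically on the multiset extension of the term-size order (where the flattened representation of \A{} and \AC{} function applications is used, so that $\arule{EqfAC}$ strictly reduces the number of arguments of $\ffac{f}$). One then checks, rule by rule in Figure~\ref{fig:eqvm-simple-freshness-constraints-rules-new}, that each application strictly decreases this measure: $\arule{EqAbs}$ replaces an abstraction equation by an equation between the bodies (with permutations pushed in but no new symbols added to the term structure); $\arule{Eqf}$ splits into proper subterm equations; $\arule{EqfC}$ branches into two cases each with strictly smaller equations; $\arule{EqfAC}$ peels off one argument, strictly shrinking the remaining flattened term. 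The decomposition phase therefore terminates in every branch, and since every rule has finitely many branchings (bounded by the arity of the head symbol), the derivation tree is finite. The post-processing steps of Algorithm~\ref{algo:eqvm} iterate finitely: the while loop (steps 5--9) processes each atom-variable matching equation at most once, the guess in step~8 ranges over the finitely many atom-variables occurring in the input (using the simplification of $\nabla$ guaranteed by Lemma~\ref{lemma:simple-freshness-contexts}), the for-loop (steps 10--13) and the injectivity check (steps 15--16) iterate over finite sets of equations and definitions respectively.

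\textbf{Soundness.} Proceed by induction on the derivation of $(\Psi;\nabla) \Lra^* (S;\nabla')$ followed by a successful post-processing. For each decomposition rule, verify that if the mapping $\Pi$ is a solution of the resulting problem, then the same $\Pi$ (restricted to the atom-variables of the original input) is a solution of the initial equation. The relevant semantic identities are: $\pi \cdot (\lambda W_1.t) \approx_E \lambda W_2.s$ iff $\pi \cdot (W_1\,A) \cdot t \approx_E (W_2\,A) \cdot s$ for a fresh $A$ (used in $\arule{EqAbs}$); $\pi$ distributes homomorphically over $f$ (used in $\arule{Eqf}$); and the definition of $\approx_E$ in Figure~\ref{fig:equational_rules} precisely corresponds to the branching of $\arule{EqfC}$ and $\arule{EqfAC}$. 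The explicit test at steps 11 and 16 guarantees that the constructed $\Pi$ actually satisfies every equation of the form $\pi_{i,1} \cdot X \match \pi_{i,2} \cdot X$ and is injective. Transitivity of $\Lra^*$ then gives $\nabla \vDash \Pi \cdot t \approx_E s$ for every input equation.

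\textbf{Completeness.} Assume some injective $\pi$ satisfies $\nabla \vDash \pi \cdot t \approx_E s$ for each input equation. Show by induction on the structure of the derivation that there is a branch of the non-deterministic decomposition whose final $S$ admits $\pi$ as a solution in the construction of Algorithm~\ref{algo:eqvm}. For the commutative and AC-rules, the permutation witnessing $\pi \cdot t \approx_E s$ selects the matching pair $(t_j, s_i)$ along which the rule should branch; the simplification of $\nabla$ via Lemma~\ref{lemma:simple-freshness-contexts} into a disjunction of simple freshness contexts ensures that one of the disjuncts is consistent with $\pi$. The guesses in step~8 can then be made consistently with $\pi$, since $\pi$ itself provides the required mapping for each atom-variable encountered.

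\textbf{Main obstacle.} The delicate point, as flagged in Section~\ref{sec:equivariance-new}, is the interaction between the branching of $\arule{EqfAC}$ and the semantic side-conditions involving atom-variable equalities encoded in $\nabla$: two subterms of an $\ffac{f}$-expression may be $\approx_E$-equal only under particular valuations of the atom-variables, and hence the correct branch depends on the simplified freshness context. Handling this requires coupling the case analysis from Lemma~\ref{lemma:simple-freshness-contexts} with the AC-branching so that each simple $\nabla$ is processed independently, and ensuring the equality tests at steps 11 and 16 are performed semantically (as emphasized by the notation $\nabla \models \cdot \approx_E \cdot$). Establishing that the semantic tests are themselves decidable reduces, via Theorem~\ref{theorem:matching-mod-E-decidable-new}, to a decidable fragment, closing the correctness argument.
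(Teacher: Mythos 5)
Your proposal is correct and follows essentially the same route as the paper: termination via the decomposition rules producing strictly smaller problems plus finiteness of the post-processing loops, and correctness by analysing the shape of the reduced equations (failing on $\pi\cdot X \match \pi'\cdot Y$, building $\Pi$ from the atom-variable equations, then checking the $\pi\cdot X \match \pi'\cdot X$ tests and injectivity). Your write-up is in fact more structured than the paper's sketch --- the explicit multiset measure and the soundness/completeness split make precise what the paper only asserts --- but no new idea is needed and no step of yours conflicts with the paper's argument.
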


\begin{proof}
    \begin{enumerate}
    \item Since the rules for \eqvm~decompose the problem into smaller problems and each pair $(S,\nabla)$ contains a finite set $S$. The tests in Algorithm~\ref{algo:eqvm} eventually terminate as they consist in checking finite permutations in a finite set of equivariance equations.
    \item \eqvm~is correct iff there exists an injective function $\Pi$ that respects some freshness constraints, say $\nabla'$. Note that $\Pi$ is the representation of a class of injective mappings under $\nabla'$, i.e., given $\Pi=\{A_1\mapsto r_1,\ldots, A_k\mapsto r_k\}$ and $\nabla'\vDash A_i \neq A_j\implies r_i\neq r_j$.

    Let $\{s\match t\}\cup\Psi;\nabla$ be the input to \eqvm, and apply the rules in Figure~\ref{fig:eqvm-simple-freshness-constraints-rules-new} until no more rules can be applied. The result is a finite set of reduced equivariance equations of the form $\pi\cdot A \match \pi'\cdot A$, $\pi\cdot A \match \pi'\cdot B$, $\pi\cdot X \match \pi'\cdot X$, $\pi\cdot X \match \pi'\cdot Y$ and a finite set of simple freshness constraints $\nabla'$, for some permutations $\pi,\pi'$, atom-variables $A,B$ and term-variables $X,Y$. If there is at least one problem of the form $\pi\cdot X \match \pi'\cdot Y$, then no  solution exists, as permutations on atoms do not guarantee equality modulo $E$ on all possible interpretations of different term-variables $X$ and $Y$. For equations of the form $\pi\cdot A \match \pi'\cdot B$ we can always select first the ones for which the atoms in $\pi$ have defined mappings, this will allow to create the mapping associated to $A$. Once those are all checked we can move to equations $\pi\cdot A \match \pi'\cdot B$ for which atoms in $\pi$ are note defined in $\Pi$ yet, but there are restrictions in $\nabla'$ for them. We check the restrictions for each atom, and explore interpretations of the atoms, investigating all possible guesses. After the mappings are defined for all the atomvariables, we check whether $\nabla'\vDash \pi\cdot X \match \pi'\cdot X$ holds. Finally, we check whether $\Pi$ is a representation of an injective mapping of atom-variables.
    \end{enumerate}
\end{proof}

\section{Conclusion and Future Work}
We have investigated the extension of the nominal anti-unification problem to theories \A, \C~and \AC. Our proposed method considers the nominal language with atom-variables \nlat~and a semantic approach to find least general generalizers modulo $E\in\{\tt A,C,AC\}$. The extended algorithm $\enau{E}$ is obtained by adding specific rules to deal with each particular theory as well as new non-deterministic equivariance algorithm \eqvm~that works, semantically, modulo the theories considered here. 
We left as future work the investigation subclasses of the generalization problem that are computationally tractable and lead to a single lgg,  and other optimization methods to work in wider subclasses of the problem, using for e.g., the rigidity functions as in~\cite{DBLP:journals/mscs/CernaK20}. Further equational theories like the considered one with a unit element are also left for future research.



\bibliography{aunif}

\appendix

\section{More Examples}

\begin{example}[Semantics of Abstractions - part 3]
    The semantics of the term-in-context $(\{A\#X\}, \lambda A. f(A,X))$ is the following:
    $$
    \begin{aligned}
    \sem{\{A\#X\}}{\lambda A. f(A,X)}&=\left\{ \eqclass{ \lambda A\rho. f(A\rho,X\rho)} \mid \rho \text{ is an interpretation}\right.
    \left. \hspace{0cm} \text{ and } A\rho \# X\rho \text{ holds }\right\}\\
    &=\{\eqclass{\lambda A\rho. f(A\rho, X\rho)} \mid A\sigma \text{ does not occur free in } X\rho\}\\
    &= \{\eqclass{\lambda a.f(a,b)},\eqclass{\lambda a.f(a,f(c,d))},\eqclass{\lambda b.f(b,\lambda a.a)},\ldots\}
    \end{aligned}
    $$
 the third example of this class is obtained with  $\rho=\{A\mapsto b, X\mapsto \lambda a. a\}$.
\end{example}

\begin{example} When given the input  $P=\{ X: f(A,B) \triangleq f(B,A)\}$ and $\nabla=\emptyset$, the $\enau{\emptyset}$-algorithm computes the generalization $\tc{\emptyset}{f(C_1,C_2)}$, which is an lgg. Note, however, that $\tc{\emptyset}{ f(C,C)}$ is not an lgg for this problem, since it cannot be instantiated back to, for e.g., $f(A,B)$.
 \end{example}

 \begin{example}[Weak completeness] Consider the pairs $(\emptyset, (f(A),A,B))$ and $(\emptyset, (\const{c},A,B))$
    where $\const{c}$ is a constant. The $\enau{\emptyset}$ gives the following derivation:
    \[(\{X: (f(A),A,B) \triangleq (\const{c},A,B)\};\emptyset;\emptyset;Id) \Longrightarrow^{*} (\emptyset;\{Y: f(A)\triangleq \const{c}\};\emptyset; \{X\mapsto (Y, A,B)\})\]
    and the generalization is $(\emptyset, (Y,A,B))$.

   \begin{claim}
   $(\emptyset, (Y,A,B))$ is not an lgg.
   \end{claim}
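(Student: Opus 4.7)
The plan is to exhibit an $E$-generalization that is strictly more specific than $(\emptyset, (Y,A,B))$, which by definition contradicts the lgg status of the latter. Borrowing the trick from \Cref{ex:abstraction_constraint}, I propose the candidate $(\{B\#\lambda A.Y\},\, (Y,A,B))$. Semantically, the constraint $B\#\lambda A.Y$ amounts to the conditional ``$A\rho = B\rho$ or $B\rho\#Y\rho$'', which is exactly the regularity enjoyed by both inputs: in each one, whenever $A$ and $B$ are instantiated to distinct atoms, the atom interpreting $B$ does not occur free in the first component.

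First I would verify that the candidate is a generalization of both inputs. Given an interpretation $\rho$ with $(Y,A,B)\rho = (f(a),a,b)$, the constraint $b\#\lambda a.f(a)$ is derivable either by the abstraction-captures-atom rule (if $a=b$) or, in the case $a\neq b$, by descending into the body to $b\#f(a)$, which holds since the only atom in $f(a)$ is $a\neq b$. The analogous argument with $Y\rho = \const{c}$ settles $(\const{c},A,B)$, using that constants are trivially fresh for any atom. Thus $\sem{\emptyset}{(f(A),A,B)}$ and $\sem{\emptyset}{(\const{c},A,B)}$ are both contained in $\sem{\{B\#\lambda A.Y\}}{(Y,A,B)}$.

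Second, I would exhibit a class in $\sem{\emptyset}{(Y,A,B)}$ that lies outside the candidate's semantics, so that the containment $\sem{\{B\#\lambda A.Y\}}{(Y,A,B)} \subseteq \sem{\emptyset}{(Y,A,B)}$ (which is immediate) is strict. Take $[(a,b,a)]$ with $a\neq b$: the unique interpretation producing this tuple sets $A\rho'=b$, $B\rho'=a$, $Y\rho'=a$, whence the constraint $B\rho'\#\lambda A\rho'.Y\rho'$ becomes $a\#\lambda b.a$, which (since $a\neq b$) reduces via the non-binding rule to the non-derivable $a\#a$. Hence $[(a,b,a)]$ is not in the candidate's semantics, giving strict containment.

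The main obstacle is finding the correct conditional freshness constraint. Naive attempts such as $\{B\#Y\}$ or $\{A\#Y\}$ fail because they exclude legitimate instances: the former excludes $(f(a),a,a)$ (where $a=b$), and the latter excludes every $(f(a),a,b)$ because $a\#f(a)$ is never derivable; symmetric additions like $\{A\#\lambda B.Y\}$ likewise fail on $(f(a),a,b)$. The abstraction-wrapped formulation encodes exactly the implication ``if $A\neq B$ then $B\#Y$'', which both inputs satisfy but which the spurious instance $(a,b,a)$ violates, making it the minimal adjustment needed to rule out precisely this kind of over-generality.
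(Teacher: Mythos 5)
Your proof is correct and follows the same strategy as the paper: exhibit a term-in-context with the same term part $(Y,A,B)$ but a conditional freshness constraint of the form ``atom $\#$ abstraction'' that is still a generalization of both inputs yet excludes a spurious class such as $[(a,b,a)]$. The one substantive difference is the orientation of the constraint, and here your version is the right one. The paper as written uses $\{A\#\lambda B.Y\}$, but for the input $(\emptyset,(f(A),A,B))$ an instance with $A\rho=a$, $B\rho=b$, $a\neq b$ forces $Y\rho=f(a)$, and $a\#\lambda b.f(a)$ reduces to the non-derivable $a\#f(a)$; so $\sem{\emptyset}{(f(A),A,B)}\not\subseteq\sem{\{A\#\lambda B.Y\}}{(Y,A,B)}$ and the paper's candidate is not in fact a generalization of that input. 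Your constraint $\{B\#\lambda A.Y\}$ encodes ``$A=B$ or $B\#Y$'', which both inputs do satisfy (in $(f(A),A,B)$ the only atom free in the first component is the one interpreting $A$, and constants are fresh for everything), while still rejecting $[(a,b,a)]$ via $a\#\lambda b.a \leadsto a\#a$. Your check that no alternative interpretation can produce the excluded class (the tuple components are rigid, so the interpretation is forced) is also the right thing to verify and is implicit rather than explicit in the paper. In short: same method, but you have corrected an apparent $A$/$B$ swap in the paper's own constraint, and your closing discussion of why the naive constraints $\{A\#Y\}$, $\{B\#Y\}$, $\{A\#\lambda B.Y\}$ all fail makes that point explicitly.
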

    Consider the constraint $(\{A\# \lambda B.Y\}, (Y,A,B))$, which is more specific.
    Note that $A\# \lambda B. Y$ restricts the instances of $A,B,Y$ as follows:
    \( (A=B \implies True) \wedge (A\neq B \implies A\# Y)\).

    That is, if $A$ and $B$ are instantiated to the same concrete atom, say $a$, then $A\# \lambda B.Y$ means  $a\# \lambda a.Y$ which is trivially true. On the other hand, if $A$ and $B$ are instantiated to different concrete atoms, say $a,b$, then $A\# \lambda B.Y$ means  $a\# \lambda b.Y$ which simplifies to $a\# Y$.
    By definition,
    \begin{itemize}
    \item $\sem{\emptyset}{ (f(A),A,B))}\subseteq \sem{(\{A\# \lambda B.Y\}, (Y,A,B))}$
    \item $\sem{\emptyset}{ (\const{c},A,B))}\subseteq \sem{\{A\# \lambda B.Y\}, (Y,A,B)}$
        \item $\sem{\{A\# \lambda B.Y\}}{ (Y,A,B)}\subseteq \sem{\emptyset}{ (Y,A,B))}$, i.e., $(\{A\# \lambda B.Y\}, (Y,A,B))$ is more specific. It is not necessary to add constraints on atom-variables that does not occur in the problem. We can prove that this is the lgg.
     \end{itemize}
\end{example}

\begin{example}[Full Computation of~\Cref{ex:first_ac}] 
\label{ex:first_ac2}
    Consider the problem $P=\{ X: \lambda A. \ffac{f}(A,A,B)\triangleq \lambda B. \ffac{f}(A,B,A)\}$ and with $\nabla=\{A\#B\}$ and  the following branch of an $\enau{E}$ derivation:

    \[
    \begin{array}{l}
     P; \emptyset; \nabla;Id\\
     \Lra_{  \arule{Abs}}  \\
    \{Y:  \ffac{f}(C,C,(A\ C)\cdot B)\triangleq \ffac{f}((B\ C)\cdot A,C,(B\ C)\cdot A)\};\emptyset; \Gamma;\{X\mapsto \lambda C. Y\}\\
    \Lra_{\arule{DecAC}}\\
    \left\{
    \begin{aligned}
        &Y_1: C \triangleq (B\ C)\cdot A,\\
        &Y_2: \ffac{f}(C,(A\ C)\cdot B)\triangleq \ffac{f}(C,(B\ C)\cdot A)
    \end{aligned}\right\};\emptyset; \Gamma;\{X\mapsto \lambda C. \ffac{f}(Y_1,Y_2)\}\\
    \Lra_{\arule{DecAC}}\\
    \left\{
    \begin{aligned}
        &Y_1: C \triangleq (B\ C)\cdot A,
        &Y_{21}: C\triangleq (B\ C)\cdot A,\\
        &Y_{22}: (A\ C)\cdot B\triangleq C\\
    \end{aligned}\right\};\emptyset; \Gamma;\{X\mapsto \lambda C. \ffac{f}(Y_1,Y_{21},Y_{22})\}\\

    \Lra_{3\times \arule{SolAB}}\\
     \emptyset; \left\{
    \begin{aligned}
        &D:  C \triangleq (B\ C)\cdot A,
        D':  C \triangleq (B\ C)\cdot A,\\
        &D'': (A\ C)\cdot B\triangleq C
    \end{aligned}\right\}; \Gamma';\{X\mapsto \lambda C. \ffac{f}(D,D',D'')\}\\
\Lra_{1 \times\arule{Mer}}\\
 \emptyset; \left\{
    \begin{aligned}
        D:  C \triangleq (B\ C)\cdot A
    \end{aligned}\right\}; \Gamma'';\{X\mapsto \lambda C. \ffac{f}(D,D,\pi \cdot D)\}\\
    \end{array}
    \]
    where
    \begin{itemize}
    \item $\Gamma =\{A\# B, \allowbreak C\#\lambda A. \ffac{f}(A,A,B), C\#\lambda B. \ffac{f}(A,B,A) \}$,
    \item $\Gamma'\!=\!\Gamma\cup \{D\#\lambda C.\lambda \swap{B}{C}\cdot A.D, \allowbreak D'\#\lambda C.\lambda \swap{B}{C}\cdot A.D', \allowbreak D''\# \lambda \swap{A}{C}\cdot B.\lambda C.D'' \}$
    \item $\Gamma'' =\Gamma \cup \{D\#\lambda C.\lambda (B\ C)\cdot A.D, \; \allowbreak \pi\cdot D\# \lambda (A\ C)\cdot B.\lambda C.\pi\cdot D \}$
    \end{itemize}
    In \Cref{ex:first_equiv} we compute  $\eqvm(\{C\match (A\ C)\cdot B, (B\ C)\cdot A\match C\}, \Gamma' )=\swap{A}{C}\circ \swap{C}{B}$  which we denote as $\pi$ above. Note that $\pi\cdot D$ is a suspension, we only apply the permutation after instantiating $D$ according to the store.  The $E$-generalization obtained is the term-in-context $\tc{\Gamma'}{\lambda C. \ffac{f}(D,D,\pi \cdot D)}$. Note that this is one possible generalization in one branch of the generalization tree. A different branch may produce a different generalization.
\end{example}

\begin{example}[Several non-equivalent lgg-s]
Another variant of the example is $\ffac{f}(s_1,s_2,s_3) \triangleq \ffac{f}(s_1,s_2,s_3,s_4)$, where $s_i$'s are  pairwise different constants, which anti-unifies, for example, to $\ffac{f}(s_1,s_2,X)$, but also to $\ffac{f}(s_1,s_3,X)$,   and $\ffac{f}(s_2,s_3,X)$.
One cannot exploit all three common elements with only one generalization since there is no single $\lggeq{AC}$ that is an instance of all them.
If we would permit the more general theory of AC1, i.e., with a unit, then there is a unique lgg, however, this theory has other properties than $C,A, AC$ insofar as it has a neutral element, and is subject to further research.
\end{example}
We give an   example showing that if  the same constants appear more than once on the same level,
 there may be multiple lgg-s.

\begin{example}[Two different $\lggeq{C}$'s]\label{example:1-C-two-lggs}
   Let $ s\triangleq t$ be a generalization problem with $s$ and $t$ as follows:    %
     $$\begin{array}{lcl}
        s &=& \ffc{f}(\ffc{f}(\ffc{f}(\const{a,a}),\ffc{f}(\const{a,b})),\ffc{f}(\ffc{f}(\const{a,a}),\ffc{f}(\const{b,b})))\\
        t &=& \ffc{f}(\ffc{f}(\ffc{f}(\const{a,a}),\ffc{f}(\const{a,b})),\ffc{f}(\ffc{f}(\const{a,b}),\ffc{f}(\const{a,b})))
    \end{array}
    $$
  Then there are exactly  two non-equivalent (w.r.t. $\approx_C$) $\lggeq{C}$'s  where $X,Y$ are generalization term-variables:
  \begin{enumerate}
      \item  $r_1= \ffc{f}(\ffc{f}(\ffc{f}(\const{a,a}),\ffc{f}(\const{a,b})),\ffc{f}(\ffc{f}(X,\const{a}),\ffc{f}(Y,\const{b})))$
      \item $ r_2=\ffc{f}(\ffc{f}(\ffc{f}(\const{a},X),\ffc{f}(\const{a,b})),\ffc{f}(\ffc{f}(\const{a,a}),\ffc{f}(Y,\const{b})))$
  \end{enumerate}
  In fact, $r_1\not\approx_C r_2$ since $\ffc{f}(\const{a},X)$ occur in  positions in $r_1$ and $r_2$ that  move outside the scope of one $\ffc{f}$ to another, but without respecting commutativity of $\ffc{f}$.
    Note that in a different branch we would get the generalizer $\ffc{f}(\ffc{f}(\ffc{f}(\const{a,a}),\ffc{f}(\const{a,b})),\ffc{f}(\ffc{f}(\const{a},X),\ffc{f}(\const{b},Y)))$   which is the same as the first by applying commutativity of $\ffc{f}$.
\end{example}

\begin{example}
    Consider the problem with terms $s=\ffc{f}(\ffc{f}(\const{a,b}),\ffc{f}(\const{c}, \ffc{f}(\const{a,c})))$ and $t= \ffc{f}(\const{a},\ffc{f}(\ffc{f}(\const{c}, \ffc{f}(\const{a,c})),\const{b}))$. Note that $M_s=\{(\const{a},2), (\const{b},2),(\const{c},2),(\const{a},3),(\const{c},3)\}$,  $M_t=\{(\const{a},1), (\const{c},3),(\const{b},2), (\const{c},4),(\const{a},4)\}$, and $M_{st}=\{(\const{b},2), (\const{c},3)\}$.

    Let $s^*=\ffc{f}(\ffc{f}(*,\const{b}),\ffc{f}(\const{c},*))$ and $t^*= \ffc{f}(*,\ffc{f}(\ffc{f}(\const{c},*),\const{b}))$. Note that $s^* \not\approx_{\tt C}t^*$. There are two $\lggeq{\tt C}$'s   $\ffc{f}(X_1,\ffc{f}(X_2,\const{b}))$ and  $\ffc{f}(X_1',\ffc{f}(\ffc{f}(\const{c}, X_2'), X_3'))$. Note that the former is not more general than the latter.
\end{example}


\section{Proofs of \Cref{ssec:correctnesss}}

\begin{proof}[Proof of \Cref{thm:type-aunif}] Consider the post-processing algorithm for specializing the freshness constraints:
\begin{myalgorithm}
       For the empty theory, the lgg of a generalization problem can be computed as follows:
   \begin{enumerate}
       \item Start with the generalizer $(\nabla,t)$  computed by $\enau{\emptyset}$.
       \item Now use a generate and test method: \\
        For every possible freshness constraint $\nabla'$ containing only atom variables  that occur in the computed solution: Check whether $(\nabla \cup \nabla',t)$  is still a generalizer of the input.
        This check can be performed using  nominal matching of terms-in-context, and consists in verifying whether $(\nabla,t)\subseteq (\nabla \cup \nabla',t)$.
        \item The result is $(\nabla \cup \{\nabla' \mid (\nabla',t)$ is a generalizer of the input\}$, t)$.
   \end{enumerate}
\end{myalgorithm}
\end{proof}

    Testing whether a term-in-context is the same or a subset of another term-in-context is not straightforward.
\begin{example}
   Consider $(\{A\#\lambda C. \lambda D.A\},A)$  and $(\{A'\#\lambda C. \lambda D.A'\},A')$. These have the same semantics:
  $A = C$  or $A = D$ and  $A' = C$  or $A' = D$, respectively.
   Testing this algorithmically cannot be done following the structure alone, but requires a case-analysis.
\end{example}

\section{Regaining completeness: towards an algorithm for post-processing and Subset-Testing of  Terms-in-Context modulo $E$}\label{app:completeness}

In order to obtain an algorithm to compute  lgg-s $(\nabla_g,s_g)$ based on $\enau{}$ a post-processing is added, which takes the resulting generalizers $(\nabla_t,t)$
of $\enau{}$ and then computes the maximal freshness context by a generate-and-test method.

We have a choice between different expressiveness of the freshness constraints.
\begin{enumerate}
    \item\label{fc-item-1} The freshness constraints formed without permutations.
    \item\label{fc-item-2}  The freshness constraints as defined in this paper.
    \item\label{fc-item-3}  The EQR-constraints as in \cite{DBLP:conf/fscd/Schmidt-Schauss22}, which are more general.
\end{enumerate}

\begin{enumerate}
\item Using \cref{fc-item-1} is not sufficiently expressive.
\item Using \cref{fc-item-2} is  general, however, it appears to have a high complexity, since a generate and test-method in the postprocessing
requires an enumeration of  all freshness constraints for a finite set of atom-variables, which presumably is of doubly exponential complexity,
since we have to cover all interpretations.
\item Using \cref{fc-item-3} is  strictly more expressive, and it is not necessary to use permutations in the EQR-constraints, as we will show.
A generate-and-test method is at most exponential, which is better than doubly exponential.
In addition, for small numbers of atom-variables, the number of possible freshness constraints appears to be in the reach of algorithms.
\end{enumerate}

\begin{definition}[EQR-fresheness constraints]
Let  $M$ be a finite set of atom-variables.
An {\em EQR-freshness-constraint} is a pair of the form $(S,C)$, where $S$ is a set of equations and disequations between atom-variables in $M$, and $C$ is a set of  freshness constraints
of the form $A_i \# X$, where $A_i \in M$ and $X$ is a variable, or ${\tt False}$.
As expected, an {\em EQR-freshness-context} is a set of EQR-freshness constraints.

The {\em standardized} form of an EQR-freshness context for a set $M$ of atom-variables is a set $\{(S_i,C_i)~|~i = 1,\ldots,n\}$,
such that  every $S_i$  is an equivalence relation on $M$, and every equivalence relation is there.
\end{definition}

Note that in the non-standardized case we do not require that all sets $S$ are there.

\begin{proposition}\label{prop:EQR-constraints-from-constraints}
    Every (non-EQR) freshness-context can be transformed into an equivalent EQR-freshness-context.
 \end{proposition}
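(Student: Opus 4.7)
The plan is to reduce to simple freshness contexts via \Cref{lemma:simple-freshness-contexts} and then rearrange the resulting disjunction so that the case split is organized by equivalence relations on the relevant atom-variables.

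First, let $\nabla$ be an arbitrary freshness context, and let $M$ be the (finite) set of atom-variables occurring in $\nabla$. By \Cref{lemma:simple-freshness-contexts}, $\nabla$ is equivalent to a disjunction $\nabla_1 \vee \cdots \vee \nabla_n$ of simple freshness contexts, where each $\nabla_i$ is a conjunction of atomic constraints drawn from the three forms $A\#B$, $A = B$ (which abbreviates $A \# \lambda B.A$), and $A\#X$. After this step the only constraints left that involve term-variables have the shape required in an EQR-constraint.

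Next, for each $\nabla_i$, separate its atomic constraints into the part $R_i$ about atom-variables (equalities $A=B$ and disequalities $A\#B$) and the part $C_i'$ about term-variables (constraints $A\#X$). The set $R_i$ defines a partial equivalence relation on a subset of $M$, together with some forbidden identifications; if these contradict (e.g.\ both $A=B$ and $A\#B$), replace $\nabla_i$ by $(S,\{{\tt False}\})$ for every equivalence relation $S$ on $M$. Otherwise, enumerate all equivalence relations $S$ on $M$ that extend $R_i$ in the sense of satisfying all equalities in $R_i$ and no forbidden identification. For each such $S$, collapse the atom-variables in $C_i'$ according to $S$ (replacing each $A$ by a canonical representative of its $S$-class), obtaining a set $C_{i,S}$ of constraints of shape $A\#X$. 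Add the pair $(S, C_{i,S})$ to the output context $\Theta$.

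Finally, I would verify semantic equivalence: an interpretation $\rho$ satisfies $\nabla$ iff it satisfies some $\nabla_i$ iff, for the unique equivalence relation $S_\rho$ on $M$ given by $A \sim B \iff A\rho = B\rho$, there exist $i$ and a chosen extension $S = S_\rho$ such that $R_i$ is consistent with $S_\rho$ and $C_i'$ (equivalently $C_{i,S_\rho}$) is satisfied by $\rho$; this matches exactly the disjunctive reading of $\Theta$ as an EQR-context. If a standardized form is needed, one pads $\Theta$ by adding $(S,\{{\tt False}\})$ for every equivalence relation $S$ on $M$ that does not yet occur, which does not change the semantics.

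The main obstacle is conceptual rather than technical: one must be careful that enumerating equivalence relation extensions of $R_i$ and renaming inside $C_i'$ really preserves semantics for \emph{every} interpretation, including those whose induced equivalence relation on $M$ strictly refines the atom-variables mentioned in $\nabla_i$. This is why we case-split over all equivalence relations on the whole of $M$ rather than only those on the atom-variables actually appearing in $\nabla_i$; the blow-up from this step is exactly what gives the (at most) exponential size bound alluded to in the preceding paragraph of the paper.
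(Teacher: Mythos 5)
Your proof is correct in substance but takes a genuinely different route from the paper's. You first invoke \Cref{lemma:simple-freshness-contexts} to put $\nabla$ into a disjunction of simple contexts and only afterwards regroup the disjuncts by the equivalence relations on $M$ compatible with their atom-variable parts. The paper inverts this order: it fixes an equivalence relation $E$ on $M$ \emph{first} and then simplifies the original constraints of $\nabla$ directly under $E$ --- evaluating suspended permutations $\pi\cdot A$ to atom-variables, deleting $A_1\#\lambda A_2.s$ when $E\models A_1=A_2$ and stripping the abstraction otherwise, deleting $A_1\#A_2$ when $E$ forces the two apart, and emitting $(E,\mathtt{False})$ if some $A\#A$ survives. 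What the paper's order buys is that, once $E$ is fixed, every equality question between atom-variables is decided, so no disjunction ever arises and each $E$ receives exactly one conjunction $C_E$ of constraints $A\#X$; the resulting EQR-context is literally one pair per equivalence relation, as the standardized form intends. Your order can attach several pairs $(S,C_{i,S})$ with distinct $C_{i,S}$ to the same $S$ (for instance $A\#\lambda B.X$ yields the disjuncts $\{A=B\}$ and $\{A\#X\}$, both compatible with the relation identifying $A$ and $B$), which is sound under the disjunctive reading you explicitly adopt, but leaves you with a merging step before you reach the one-pair-per-relation standardized form --- and the natural way to perform that merge is precisely the paper's direct simplification under a fixed $S$. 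Both constructions have the same blow-up, governed by the Bell number of equivalence relations on $M$. A minor further point: replacing a contradictory disjunct by $(S,\{\mathtt{False}\})$ for every $S$ is harmless under your disjunctive reading (such pairs are unsatisfiable and contribute nothing), but simply discarding the disjunct would be cleaner.
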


\begin{proof}
Let $\nabla$ be a freshness context. Let $M$ be the set of atom-variables occurring in $\nabla$. Let $E$ be an  equivalence relation on atom-variables of $M$,
and let  $A \# s$  be a  constraint in $\nabla$.
We can assume that $s$ does not contain function symbols, by replacing it by an equivalent set of simpler constraints.
Every
expression $\pi\cdot A\in \nabla$ can be evaluated to an atom-variable by applying the equations/disequations in $E$.
Constraints of the form $A_1 \# \lambda A_2.s$ can be either removed, if $E \models A_1 = A_2$, or simplified to $A_1 \#  s$, otherwise.
Also, all constraints of the form $A_1 \# A_2$ can be removed, if $E \vDash A_1 \not= A_2$. If  $\{A\#A\}$  is in the resulting set,
then the pair is replaced by
$(E,\texttt{False})$.
After applying this, the remaining  simple
constraints are of the form $A  \# X$.
\end{proof}

This implies that the number of freshness constraints that are required in a post-processing test is at most doubly exponential:

\begin{proposition}\label{prop:number-EQR-freshness-constraints}
For a set  $n$ of atom-variables, and $m$ term-variables, an upper bound for the number of EQR-freshness constraints is
$(n*m)^{B_n}$ where $B_n < n^n$ is the Bell-number. Hence this is smaller than $(n*m)^{n^n}$, which is doubly exponential.

\end{proposition}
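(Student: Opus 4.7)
The plan is to count standardized EQR-freshness contexts directly as a product over equivalence relations on $M$, using the defining properties of the standardized form together with elementary combinatorial bounds.

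First, I would invoke that in the standardized form, the $S$-component of each pair $(S,C)$ is an equivalence relation on $M$ (with $|M| = n$), and every such equivalence relation appears exactly once in the family $\{(S_i,C_i)\}$. Since the number of equivalence relations on an $n$-element set coincides with the number of set partitions of that set, which is by definition the Bell number $B_n$, the indexing set of the family has cardinality exactly $B_n$. Hence the total count of standardized EQR-freshness contexts factors as $\prod_{i=1}^{B_n} N_i$, where $N_i$ denotes the number of admissible choices of $C_i$ once $S_i$ is fixed.

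Second, I would estimate each $N_i$. By the definition of an EQR-freshness constraint, $C_i$ is either the special value False or a set of simple constraints of the form $A \# X$ with $A \in M$ an atom-variable and $X$ a term-variable drawn from a set of size $m$; there are thus $n \cdot m$ atomic constraints overall, and the essential number of admissible $C_i$ is bounded by $nm$ after the natural identifications induced by $S_i$ (distinct atom-variables collapsed into the same $S_i$-class yield equivalent freshness constraints and so do not multiply the count). Taking the product over all $B_n$ equivalence relations yields the stated upper bound $(nm)^{B_n}$.

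Third, to derive the doubly exponential estimate, I would use the crude bound $B_n < n^n$: every partition of $M$ is determined by the function $M \to M$ that maps each element to the minimum element of its $S$-class under a fixed linear order, and the set of functions $M \to M$ has cardinality $n^n$; since this assignment is injective on partitions, $B_n \le n^n$, and strict inequality is immediate for $n \ge 2$. Substituting gives $(nm)^{B_n} < (nm)^{n^n}$, which is doubly exponential in $n$ as claimed.

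The main obstacle is the tight estimate in the second step: justifying that the per-$S_i$ contribution is $O(nm)$ rather than the naive $2^{nm}$ that comes from viewing $C_i$ as an arbitrary subset of $n\cdot m$ atomic constraints. Making this precise requires a careful argument about which $C_i$ represent semantically distinct EQR-constraints modulo the fixed equivalence $S_i$, exploiting that $S_i$-equivalent atom-variables collapse large families of nominally different subsets into the same constraint and leave only polynomially many distinguishable choices.
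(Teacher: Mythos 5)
Your organising idea --- index the count by the $B_n$ equivalence relations on the $n$ atom-variables and bound the total as a product of per-relation contributions, then use $B_n < n^n$ --- is exactly the shape of the paper's argument, and your third step is fine. But the gap you flag in your second step is real, and your proposed repair does not close it. By the paper's definition an EQR-constraint is a pair $(S,C)$ where $C$ is a \emph{set} of simple constraints $A\#X$ (or \texttt{False}), so once $S_i$ is fixed the number of candidates for $C_i$ is $2^{nm}+1$, not $nm$. Quotienting by $S_i$ only replaces $n$ by the number $p_i$ of $S_i$-classes: the admissible $C_i$ become subsets of a set of $p_i\cdot m$ atomic constraints, which is still $2^{p_i m}$ many. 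Collapsing $S_i$-equivalent atom-variables identifies individual atomic constraints with one another, but it does not identify distinct \emph{subsets} of the surviving constraints, so nothing reduces $2^{p_i m}$ to the polynomial count your product needs. As written your argument yields $(2^{nm}+1)^{B_n}$, i.e.\ roughly $2^{nm\cdot B_n}$; this is still doubly exponential, so the qualitative conclusion survives, but the stated base $nm$ is not obtained.

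For comparison: the paper's own proof does not count EQR-contexts at all. It counts semantically distinct \nlat-permutations, by mapping each permutation $\pi$ to the tuple $(\pi_1,\ldots,\pi_{B_n})$ of concrete permutations it induces under each of the $B_n$ equivalence relations, arriving at $(n^n)^{B_n}$ with per-relation base $n^n$. So the paper's proof has the same product-over-$B_n$ structure as yours but applies it to a different object, and it too does not literally derive the $(nm)^{B_n}$ figure in the statement. The honest version of your computation --- per-relation count $2^{p_i m}+1 \le 2^{nm}+1$, total $\le (2^{nm}+1)^{B_n}$ --- is what the definition actually supports, and it still delivers the doubly exponential upper bound that the surrounding complexity discussion requires.
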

  \begin{proof} Let us fix $n$, the number of atom variables. Every \nlat-permutation $\pi$ built with $n$ atom-variables maps under an interpretation $\rho$ to a permutation of at most $n$ atoms. The (group-)order of any permutation is a divisor of $n!$, the order of the symmetric group over $n$ elements.
  Hence every interpretation of $\pi^{n!}$ for some NLA-permutation $\pi$ is the unit-element, which is the identity permutation.

  Let's count the possibilities:
  Let $E_1, \ldots, E_k$ be the different equivalence relations of  $\{A_1,\ldots,A_n\}$, where $k$ is their number.  Then $k$ is the Bell-number $B_n$, where $B_n \leq n^n$ is known.
  All possibilities for interpretations of an NLA-permutation $\pi$ are represented by mapping $\pi$ into a tuple $(\pi_1, \ldots,\pi_k)$,
  where $\pi_i$ is the permutation related to $\pi$ using the interpretation with index $i$, resp. the equivalence class $E_i, i = 1,\ldots,k$. This mapping completely determines the semantic behavior of $\pi$ under all interpretations.

  The number of differently interpretable \nlat-permutations is at most $(n^n)^{B_n}$,
   i.e., at most $(n^{n*(n^n)})$, which is doubly exponential.
  It is also not hard to see that this set can also be constructed:
  An enumeration that generates all of them for   1 swapping, 2 swappings, etc, and eliminating redundant ones using the described method will stop when for a number $m$ there are no further NLA-permutations added to the non-redundant set. Then it stops with success. The number of these construction steps is at most $(n^n)^{B_n} = n^{n*B_n}$,
  which implies that their representational length is bounded accordingly.
\end{proof}

\begin{myalgorithm}[Post-processing]
    Given $(\nabla_s,s)$ and $(\nabla_t,t)$ that have to be generalized and a computed generalizer $(\nabla_g,s_g)$,
    the following is iterated, where $M$ is the set of atom-variables in  (where we may assume that the freshness contexts are EQR-constraints)
    \begin{enumerate}
        \item Select an EQR-context $\Delta$.
        \item If $(\nabla_g \cup \Delta,s_g)$ is more general than $(\nabla_s,s)$ and $(\nabla_t,t)$, then let
        $(\nabla'_g,s_g)$ be $(\nabla_g \cup \Delta,s_g)$, where the result of union is represented  as an EQR-constraint.
        Otherwise $(\nabla_g,s_g)$ is unchanged.
        \item Iterate this.
    \end{enumerate}
\end{myalgorithm}

The missing part to be described  is the test for ``being more general'', and as a subalgorithm a consistency test for freshness constraints. Both are given below.
Since only a finite number of  freshness constraints for a fixed finite set of atom- and term-variables has to be tested, this method allows to compute
a unique generalizer of $s,t$ that refines $(\nabla_g,s_g)$.



 \begin{example} We illustrate the idea of the subsumption of two terms-in-context.
 \begin{itemize}
     \item $(\{A\#X\},f(A,X))$ is a proper subset of $(\emptyset,f(A,X))$; it is equal to $(\{B\#Y \},f(B,Y))$;
     and a superset of $(\{A\#X\},f(A,g(X)))$.
     \item $(\emptyset,\lambda A.f(A,X))$  is the same as $(\emptyset,\lambda B.f(B,Y))$.
     \item   $(\{B\#Y\},f(A, g(Y)))$  is not a subset of   $(\{A\#X\},f(A,X))$.
       \item   $(\{A\#Y\},f(A,g(Y)))$  is  a proper subset of   $(\{A\#X,B\#Y\},f(A,X))$.
   \end{itemize}
 \end{example}


In the following we present a sequential procedure, called {\sf TIC-subset}, to test if a Term-In-Context is subset of another, i.e., tests whether $(\nabla_s,s) \subseteq (\nabla_t,t)$.
\begin{myalgorithm}[TIC-subset]
  Let $(\nabla_s,s)$ and $ (\nabla_t,t)$ be two terms-in-context.
\begin{enumerate}
    \item Rename the term-variables and atom-variables of $(\nabla_s,s)$, such that after this operation $(\nabla_s,s)$  is variable disjoint with $(\nabla_t,t)$.
    \item Compute a match $\sigma$  such that $t\sigma  \approx_E s$.
          If this is not possible, then there is no subset relation.
    \item Consider $\nabla_s$  and $(\nabla_t)\sigma$:  Every element of  $\llbracket(\nabla_s,s)\rrbracket$ must be contained in  $\llbracket((\nabla_t)\sigma,t\sigma)\rrbracket$.
       Here we can assume that every freshness constraint is of the form $A\#t$ where $A$ is an atom variable.
       This  condition is satisfied, if  for every constraint $A \# r$  in $(\nabla_t)\sigma$ we have $\nabla_s \vdash A\# r$ (Use \Cref{algo:constraints} to check).
\end{enumerate}
\end{myalgorithm}

Note that steps 2. and 3. check whether $\sigma$ is a solution to an equational matching-in-context problem:   we need to find $\sigma$ such that $\nabla_t\sigma$ holds and $\nabla_t\sigma\models t\sigma\approx_Es$ and $\nabla_t\sigma\models \nabla_s$. The latter is to guarantee that the constraints for $s$ are respected. Algorithms to solve equational nominal matching problems can be found in~\cite{DBLP:conf/mkm/AyalaRinconFSKN23,DBLP:journals/mscs/Ayala-RinconSFS21}.

\begin{myalgorithm}[Test Constraints]\label{algo:constraints}
A simple algorithm for testing whether $\nabla_s \vdash A\# r$ is as follows:
\begin{itemize}
    \item Let $\{a_1,\ldots,a_n\}$ be a set of atoms, where $n$ is the number of atom-variables  in $\nabla_s,A$.
    \item Let $\phi$ be a function mapping atom-variables into the set $\{a_1,\ldots,a_n\}$, and mapping the term-variables to subsets of
    $\{a_1,\ldots,a_n\}$.
    \item Test whether under $\phi$ the relation  $\nabla_s \vdash A\# r$ holds. I.e. either $\phi(\nabla_s)$ is false, or
    $\phi(\nabla_s)$ is true, and then also $\phi(A\# r)$ must hold.
\end{itemize}

\end{myalgorithm}

The problem-class is coNP, hence we can expect only algorithms that are exponential for this problem.

\section{Additional Material: Cases with Unique lggs}\label{sec:optimizations}

The number of computed lggs may in general be very large. Hence, it is useful to investigate optimizations of our algorithm to avoid the computation of redundant ones, and to identify cases where a single lgg exists.
The recognition and efficient computation of a single lgg for a given problem may be useful in practical scenarios where a list $e_1,\ldots,e_n$ of $n$ expressions is given
  and the possibilities for generalization for every pairs $e_i,e_j$ are searched for. A good case is if there is only one or few generalizers for a special pair, which may indicate that the expressions are derived from or even instances of the same general expression.

  We will treat the topic in this section only for some selected special cases, omitting the freshness constraints, and only expressions with a single function symbol (perhaps occurring several times) which may be an \A, \C, or \AC-symbol and the arguments are ground and formed using symbols from $\Sigma_{\emptyset}$, and there are no atoms. It is future work to generalize and extend the criteria to more general situations.

  \subparagraph*{Unique generalizers for \AC.}
 We define a strategy to exhibit unique $\lggeq{\tt AC}$ situations in the case where only a single \AC-function symbol is permitted (possibly multiple occurrences)  and, in addition, signature constants. Thus, we only consider flattened expressions $\ffac{f}(s_1,\ldots,s_n)$ where $n \geq 2$ and $s_i$ are constants.
 More general situations are left for further research.
\begin{example}[Unique $\lggeq{\tt AC}$]\label{example:first-unique-AC}
 Let $\ffac{f}(s_1,s_2,s_3,s_4) \triangleq  \ffac{f}(s_5,s_6,s_1,s_2)$ be a generalization problem where all  $s_i, i=1,\ldots, 6$ are different constants.
  It is sufficient to generalize only $\ffac{f}(s_3,s_4) \triangleq  \ffac{f}(s_5,s_6)$, which results in $\ffac{f}(X_1,X_2)$, and the overall simplified $\lggeq{\tt AC}$  is
of the form
$\ffac{f}(s_1,s_2,X_1,X_2)$.
 In  the slightly more complex case $\ffac{f}(s_1,s_2, s_3,s_4)$  $\triangleq$ $\ffac{f} (s_5,s_6,$ $s_1,s_2,s_7,s_8)$, where all $s_i,s_j$ are pairwise different constants,
 the lgg is also
$\ffac{f}(s_1,s_2,X_1,X_2)$.
\end{example}


  We will use multisets and annotate the corresponding multiset-operations with {\sf ms}.

\begin{myalgorithm}[AC-criterion]\label{def:C-unique-lgg}
   Let $\ffac{f}(s_1,\ldots,s_n) $ and $ \ffac{f}(t_1,\ldots,t_m)$, with $m,n \ge 2$ be the expressions to be generalized:

   \begin{itemize}
       \item If $\ffac{f}(s_1,\ldots,s_n) \approx_{\tt AC}\! \ffac{f}(t_1,\ldots,t_m)$, then there is a unique $\lggeq{AC}$ which is $\ffac{f}(s_1,\ldots,s_n)$.
       \item If  $\ffac{f}(s_1,\ldots,s_n) \not\approx_{\tt AC} \ffac{f}(t_1,\ldots,t_m)$, then we further check the following conditions.
       There are three disjoint multisets $M_1,M_2$ and $ M_3$, such that:
       (i) $M_1 \cup_{\sf ms} M_2 \cup_{\sf ms} M_3 = \{s_1,\ldots,s_n,t_1,\ldots,t_m\}$;
       (ii)  the multisets $M_i$'s are pairwise disjoint and $M_2,M_3 \not= \emptyset$;
       (iii) for $i = 1,2,3$: each $M_i$ does not contain double elements;
       (iii) $M_1 = \{s_1,\ldots,s_n\} \cap_{\sf ms}  \{t_1,\ldots,t_m\}$.
       (iv) $\{s_1,\ldots,s_n\}  = M_1 \cup_{\sf ms} M_2$, and $\{t_1,\ldots,t_m\}  = M_1 \cup_{\sf ms} M_3$.

        Then the unique $\lggeq{\tt AC}$  is $\ffac{f}(r_1,\ldots,r_k,X_1,\ldots,X_l)$,  where $\{r_1,\ldots,r_k\} \approx_E M_1$, and $l$ is the minimum of the cardinality of $M_2$ and $M_3$.
   \end{itemize}
\end{myalgorithm}


\begin{theorem}[\AC-criterion] Let  $X:s\triangleq t$ be a anti-unification problem
 where $s$ and $t$ are as in the \AC-criterion.  \Cref{def:C-unique-lgg} computes the unique $\lggeq{\tt AC}$ for $s$ and $t$ in polynomial time.
\end{theorem}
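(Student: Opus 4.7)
The plan is to split into the trivial case $s \approx_{\tt AC} t$ and the non-trivial case, prove soundness, minimality, and uniqueness of the output of \Cref{def:C-unique-lgg}, and finally observe the polynomial complexity of the construction. In the trivial case the semantics $\sem{\emptyset}{s}$ is the singleton $\{[s]_{\tt AC}\}$, which is contained in the semantics of every generalizer; hence $s$ itself is both a generalizer and a lower bound in the generality order, and its semantics being a singleton forces any minimal generalizer to be $\approx_{\tt AC}$-equivalent to it.

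In the non-trivial case let $r = \ffac{f}(r_1,\dots,r_k,X_1,\dots,X_l)$ be the candidate returned by the algorithm, with $\{r_1,\dots,r_k\} = M_1$ and $l=\min(|M_2|,|M_3|)$. Without loss of generality assume $|M_2| \le |M_3|$, so $l = |M_2|$. For soundness I would enumerate $M_2 = \{u_1,\dots,u_l\}$ and set $\sigma_1(X_i) = u_i$, and for $\sigma_2$ partition $M_3$ into $l$ non-empty blocks $G_1,\dots,G_l$ and set $\sigma_2(X_i) = \ffac{f}(G_i)$ (which collapses to the unique element of $G_i$ when $|G_i|=1$, using the convention $\ffac{f}(t)=t$). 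After AC-flattening, $r\sigma_1 \approx_{\tt AC} s$ and $r\sigma_2 \approx_{\tt AC} t$, because the common part $M_1$ occurs literally in $r$.

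The heart of the argument is minimality. Given any generalizer $r'$ with witnesses $\sigma'_1,\sigma'_2$, I would exhibit a substitution $\tau$ such that $r'\tau \approx_{\tt AC} r$, which suffices to conclude $\sem{r}\subseteq \sem{r'}$ and hence $r\epreceq r'$. Since $s \not\approx_{\tt AC} t$ are both headed by $\ffac{f}$, $r'$ is either a single variable (trivially more general than $r$) or of the form $\ffac{f}(c_1,\dots,c_j,Y_1,\dots,Y_{h-j})$ with $c_i$ constants and $Y_i$ variables. Every constant $c_i$ must survive flattening in both $r'\sigma'_1 \approx_{\tt AC} s$ and $r'\sigma'_2 \approx_{\tt AC} t$, so $c_i \in (M_1\cup M_2)\cap (M_1\cup M_3)=M_1$; and no constant may repeat in $r'$ because $s,t$ are duplicate-free by the no-double-elements hypothesis on $M_1,M_2,M_3$. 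Next, each of the $h$ slots of $r'$ contributes at least one flattened element under $\sigma'_1$ and under $\sigma'_2$, giving the arity bound $h \le |M_1| + \min(|M_2|,|M_3|) = |M_1|+l$. Consequently the $|M_1|-j+l$ residual positions of $r$ (namely $M_1\setminus\{c_1,\dots,c_j\}$ together with $X_1,\dots,X_l$) can be packed into the $h-j$ variables $Y_i$ (with $\ffac{f}$-grouping whenever $h-j$ is strictly smaller) to define $\tau$ satisfying $r'\tau \approx_{\tt AC} r$.

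Uniqueness up to $\approx_{\tt AC}$ then follows because the same argument applied to any other minimal generalizer forces its top-level multiset of arguments to coincide with $\{r_1,\dots,r_k,X_1,\dots,X_l\}$ modulo variable renaming, and AC makes the order of these arguments immaterial. Polynomial complexity is immediate: computing the multiset intersection $M_1$, extracting the differences $M_2,M_3$ and assembling $r$ are $O((n+m)\log(n+m))$ with standard sorting-based implementations. The main delicate step I expect to be the arity bound $h \le |M_1|+l$ combined with the packing construction for $\tau$, because it must simultaneously exploit the flattened representation, the no-double-elements hypothesis, and the fact that each $Y_i\sigma'_1$ contributes a non-empty multiset of constants after flattening; the no-unit assumption (absence of a neutral element) is essential here, since otherwise $Y_i\sigma'_j$ could collapse to an empty contribution and the counting would break down.
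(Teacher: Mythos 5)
The paper states this theorem without any proof (it appears in the appendix with no accompanying argument), so there is nothing official to compare against; your proposal is, as far as I can check, a correct and essentially complete reconstruction. The split into the trivial case $s\approx_{\tt AC}t$, the soundness construction (enumerating $M_2$ and partitioning $M_3$ into $l$ non-empty blocks), the arity bound $h\le |M_1|+\min(|M_2|,|M_3|)$ derived from the fact that every top-level slot of a competing generalizer $r'$ contributes a non-empty multiset of constants after flattening, and the packing construction for $\tau$ are exactly the ingredients needed. You also correctly isolate the two hypotheses that make the counting work: the absence of a unit element and the disjointness/duplicate-freeness of $M_1,M_2,M_3$ together with $M_2,M_3\neq\emptyset$ (without which the paper's own example $\ffac{f}(s_1,s_2,s_3)\triangleq\ffac{f}(s_1,s_2,s_3,s_4)$ yields several incomparable generalizations).

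Three small points to tighten. First, a notational slip: with the paper's convention that $(\nabla',r')\epreceq(\nabla,r)$ means $\sem{\nabla}{r}\subseteq\sem{\nabla'}{r'}$, the inclusion $\sem{\emptyset}{r}\subseteq\sem{\emptyset}{r'}$ you establish should be read as $r'\epreceq r$ ($r'$ is more general), not $r\epreceq r'$; the semantic content is right, only the arrow is flipped. Second, your duplicate-freeness argument rules out repeated constants in $r'$, but the same argument is needed for repeated term-variables: a variable occurring twice at top level would inject the same non-empty multiset of constants twice into $r'\sigma'_1$, contradicting that $s$ is duplicate-free, and this is what makes $\tau$ a well-defined substitution on distinct variables. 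Third, the case analysis on the shape of $r'$ should explicitly discard arguments headed by other function symbols, abstractions, or atom-variables (all impossible because $s$ and $t$ are built from $\ffac{f}$ and constants only, and substitution cannot erase such material), and if $r'$ carries a freshness context $\nabla'$ one should note that, since $r'$ contains no atom-variables, any consistent $\nabla'$ can be satisfied by interpreting its atom-variables as atoms not occurring in the target ground instance, so the context does not shrink the semantics in a way that breaks the required inclusion. None of these is a genuine obstacle; they are routine completions of a sound argument.
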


\subparagraph*{Unique Generalizers for\ \C.} We define a strategy to test a particular unique-$\lggeq{\tt C}$ situation in the case where only a single \C-function symbol, say $\ffc{f}$, is permitted, and also signature constants.
We start with an example for the existence of a unique $\lggeq{\tt C}$:

\begin{example}[Unique $\lggeq{\tt C}$]
    Let $\ffc{f}( \const{a},\ffc{f}(\const{a,b}))$ and $ \ffc{f}(\const{a},\ffc{f}(\const{a,d}))$ be two \nlat~terms to be generalized. There is a unique $\lggeq{\tt C}$, which is $\ffc{f}(\const{a},\ffc{f}(\const{a},X))$.
   The algorithm  $\enau{E}$ will compute four generalizers,  which are all equivalent  to  the $\lggeq{\tt C}$ above,
   thus the computation of one would be sufficient.
    All other generalizers computed in the various branches of the $\enau{E}$ are either equivalent to the $\lggeq{\tt C}$ above (thus redundant, we do not need to compute them multiple times) or equivalent to more general generalizers.
\end{example}


Consider the following set: $M_s := \{(c,i)~|~c \mbox{ is a constant in } s \mbox{ and } i \mbox{ is its depth in } s\}$. Note that the set and number of occurrences of   $(c,i)$ in an expression is invariant w.r.t. $\approx_{\tt C}$.

\begin{myalgorithm}[C-criterion]\label{def:unique-lgg-C}
     Let $s=\ffc{f}(s_1,s_2)$ and $t=\ffc{f}(t_1,t_2)$ be the two expressions to be generalized, where $s_1,s_2,t_1,t_2$ are expressions built from $\ffc{f}$ and signature constants.
    \begin{itemize}
        \item If $\ffc{f}(s_1,s_2) \approx_{\tt C} \ffc{f}(t_1,t_2)$, then there is a unique $\lggeq{\tt C}$.
        \item  Else,  $\ffc{f}(s_1,s_2) \not\approx_{\tt C} \ffc{f}(t_1,t_2)$ and  we apply the following procedure:
        \begin{itemize}
            \item If  $M_s$ and $M_t$ have only unique occurrences of pairs $(c,i)$, then
             let $M_{st} := M_s \cap M_t$.

             Let $s^*$ be derived from $s$ be replacing all maximal subexpressions in $s$ that do not contain subexpressions from  $M_{st}$ by a single element, say $*$. Similarly construct $t^*$. \\
            If  $s^* \approx_{\tt C} t^*$ then
              we compute the $\lggeq{\tt C}$~now from  $s^*$ where all $*$-s are replaced by different variables, resulting in the $\lggeq{\tt C}$-expression.
            \item Else, more than one $\lggeq{\tt C}$ might exist.
         \end{itemize}
       \end{itemize}
  \end{myalgorithm}


\begin{theorem}[\C-criterion] The criterion in \Cref{def:unique-lgg-C}  guarantees a unique $\lggeq{\tt C}$
    of the two input expressions with a top-\C-symbol. The computation is in polynomial time.
\end{theorem}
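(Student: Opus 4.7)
The plan is to handle the two cases of \Cref{def:unique-lgg-C} separately and then bound the cost of each step.

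For the trivial case $\ffc{f}(s_1,s_2)\approx_{\tt C}\ffc{f}(t_1,t_2)$, the expression itself is an $E$-generalizer, and since any generalizer of an expression with itself must subsume the expression, it is the $\lggeq{\tt C}$. Uniqueness holds up to $\approx_{\tt C}$ by a straightforward induction on the structure of $s$, using commutative decomposition at each $\ffc{f}$-node.

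For the non-trivial case, the key invariant is that the multisets $M_s$ and $M_t$ are invariants of $\approx_{\tt C}$, since commutativity only swaps sibling subterms and preserves depths. Because pairs $(c,i)\in M_s\cap M_t=M_{st}$ occur \emph{uniquely} in both $s$ and $t$, each such constant occurrence has a position (modulo commutative sibling-swaps) that is forced. The first subclaim I would prove is: if $s^*\approx_{\tt C}t^*$, then there is a common ``skeleton'' tree whose leaves are either elements of $M_{st}$ (in fixed positions modulo commutativity) or $*$-markers, and every other generalizer must preserve this skeleton, because an element of $M_{st}$ cannot be abstracted to a variable without losing the common constant in that exact position at that exact depth. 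Second, the $*$-subexpressions of $s$ and $t$ at corresponding skeleton positions differ on at least one constant or shape (otherwise they would have been absorbed into the skeleton), so they can only be generalized by fresh variables; distinctness of the variables is forced by the fact that different $*$-positions correspond to $\ffc{f}$-subterms of $s$ and $t$ whose pairwise generalizations have no forced identity. This gives both that the computed term is a generalizer and that it is uniquely the least general one, up to $\approx_{\tt C}$ and variable renaming.

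The hardest step will be the uniqueness argument, in particular formalizing ``maximal $*$-subexpressions'' and showing that the computed term is not subsumed by any strictly more specific one. My plan is to argue by contradiction: assume a generalizer $r$ strictly below the computed one; then either $r$ instantiates some $X_i$ to a non-variable term common to the two $*$-subexpressions at that position, contradicting maximality of the $*$-replacement (since such a common structure would have extended $M_{st}$), or $r$ fixes two distinct $*$-positions to the same variable, contradicting the unique-occurrence hypothesis on $M_{st}$ by producing an equality that one of $s,t$ does not satisfy. A secondary obstacle is ruling out spurious $\lggeq{\tt C}$'s that arise from swapping arguments inside $\ffc{f}$; these are handled by choosing the canonical alignment witnessed by $s^*\approx_{\tt C}t^*$ and observing that any alignment violating it would move an element of $M_{st}$ out of its forced position, contradicting uniqueness of $(c,i)$ in $M_s$ and $M_t$.

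For the polynomial-time bound, each ingredient is cheap: computing $M_s,M_t$ and $M_{st}$ is linear in the term size; testing uniqueness of occurrences inside $M_s,M_t$ is linear after sorting; constructing $s^*$ and $t^*$ is linear; and deciding $\approx_{\tt C}$ between ground flat $\ffc{f}$-trees is polynomial via computation of commutative canonical forms (sort siblings recursively). Replacing each $*$ by a distinct fresh variable and emitting the result is linear. Combining these yields an overall polynomial-time procedure, completing the proof.
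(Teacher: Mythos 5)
The paper itself states this theorem without giving a proof, so I can only judge your proposal on its own merits rather than against the authors' argument. Your overall architecture is reasonable: the invariance of $M_s$ and $M_t$ under $\approx_{\tt C}$, the observation that uniquely-occurring pairs $(c,i)\in M_{st}$ have forced positions modulo commutative sibling swaps, and the polynomial-time bound via recursive sorting of $\ffc{f}$-siblings are all correct. However, there is a genuine gap at the step where you claim that corresponding $*$-subexpressions ``differ on at least one constant or shape \ldots{} so they can only be generalized by fresh variables.'' Having no common constant at any common depth does not force the common generalization of two subterms to be a bare variable: the subterms can still share their $\ffc{f}$-shape. Concretely, take $s=\ffc{f}(\const{a},\ffc{f}(\const{e_1},\const{e_2}))$ and $t=\ffc{f}(\const{a},\ffc{f}(\const{e_3},\const{e_4}))$ with all constants pairwise distinct. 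The criterion's preconditions hold, $M_{st}=\{(\const{a},1)\}$, $s^*=t^*=\ffc{f}(\const{a},*)$, and the output is $\ffc{f}(\const{a},X)$; yet $\ffc{f}(\const{a},\ffc{f}(Y_1,Y_2))$ is also a generalizer and is strictly less general, since $[\ffc{f}(a,b)]$ lies in the semantics of $\ffc{f}(\const{a},X)$ but not in that of the refined term. Your planned contradiction argument does not catch this case: the refinement $X\mapsto\ffc{f}(Y_1,Y_2)$ neither identifies two $*$-positions nor introduces a common constant that would have enlarged $M_{st}$. So either you must add an argument (or a hypothesis) controlling the common $\ffc{f}$-shape of the erased subexpressions, or the criterion as literally stated does not yield a least general generalizer and the theorem needs qualification.

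A secondary, fixable slip is one of direction: you assert that ``every other generalizer must preserve this skeleton,'' which is false as stated (the bare variable $X$ is a generalizer and preserves nothing). What you actually need is that the rigid part of \emph{any} generalizer $r$ embeds, modulo commutative swaps, into the skeleton of the computed term $r_0$, so that $r_0$ is an instance of $r$ and hence $r\epreceq r_0$. Relatedly, your by-contradiction argument establishes only that $r_0$ is \emph{minimal}; uniqueness of the lgg additionally requires ruling out incomparable minimal generalizers, which is exactly where the unique-occurrence hypothesis and the alignment $s^*\approx_{\tt C}t^*$ must be invoked to show all generalizers are comparable to $r_0$. The complexity analysis is fine.
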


\begin{example}
    Consider the problem with $s=\ffc{f}(\const{a},\ffc{f}(\const{a,b}))$ and $t= \ffc{f}(\const{b},\ffc{f}(\const{a,d}))$. By definition,  $M_s=\{(\const{a},1), (\const{a},2),(\const{b},2)\}$,  $M_t=\{(\const{b},1), (\const{a},2),(\const{d},2)\}$, and $M_{st}=\{(\const{a},2)\}$. Let $s^*=\ffc{f}(*,\ffc{f}(\const{a},*))\approx_{\tt C} t^*$. The $\lggeq{\tt C}$ is  $\ffc{f}(X_1,\ffc{f}(\const{a},X_2))$.
\end{example}


\subparagraph*{Unique Generalizers for \A.}

  We define a criterion for unique-$\lggeq{\tt A}$ situations where only  a single \A-function symbol is permitted (possibly multiple occurrences), and signature constants.

\begin{myalgorithm}[\A-criterion]\label{def:lgg-for-A}

   Let  $s = \ffa{f}(s_1,\ldots,s_n)$ and $t = \ffa{f}(t_1,\ldots,t_n)$ be the two expressions to be generalized,
   where $s_i,t_i$ are signature constants for all $i$.
   \begin{enumerate}

     \item The string $s_1\ldots s_n$ is of the form $Q_1 S_1 Q_2 S_2\ldots S_{n-1} Q_n$, where  $Q_i, S_i$ are non-empty substrings of constants with the possible exceptions of $Q_1,Q_n$.
       \item The string $t_1\ldots t_n$ is of the form $Q_1T_1Q_2T_2\ldots T_{n-1}Q_n$, where  $T_i$ are non-empty substrings of constants for all $i$.
       \item There are three disjoint sets $M_Q,M_S, M_T$ of signature constants, such that $Q_i$ only contains $M_Q$-symbols, $S_i$ only $M_S$-symbols and $T_i$ only $M_T$-symbols.
       \item The strings $Q_1\ldots Q_n$, $S_1\ldots S_{n-1}$ and $T_1\ldots T_{n-1}$ are linear, i.e., there are no double occurrences of symbols in them.
   \end{enumerate}
   Then the unique $\lggeq{\tt A}$  is $s_1 X_{1,1}\ldots X_{1,k_1}s_2\ldots s_{n-1}X_{n-1,1} \ldots X_{n-1},\ldots X_{n-1,k_n}s_n$, where $k_j$ is the minimum of the lengths of $S_j$ and $T_j$, and $X_{i,j}$ are generalization variables.
\end{myalgorithm}

\begin{example}
Let $s = ab c_1c_2c_3d$  and $t = abc_4c_5d$. Then an $\lggeq{\tt A}$ is  $r = abY_1Y_2d$. Note that for $\sigma_1=\{Y_1\mapsto c_1, Y_2\mapsto c_2c_3\}$ and $\sigma_2=\{Y_1\mapsto c_4, Y_2\mapsto c_5\}$, we get $r\sigma_1=s$ and $r\sigma_2=t$.

\end{example}

\begin{theorem}[\A-criterion]   Let  $X:s\triangleq t$ be a anti-unification problem
 where $s$ and $t$ are as in the \A-criterion.
    Then the  computed generalizer in \Cref{def:lgg-for-A} is a unique $\lggeq{\tt A}$ and it is computed  in polynomial time.
\end{theorem}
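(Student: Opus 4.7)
The plan is to establish three things separately: first, that the expression $r^*$ output by \Cref{def:lgg-for-A} is indeed an \A-generalizer of $s$ and $t$; second, that every \A-generalizer of $s$ and $t$ is $\epreceq$-below $r^*$, so $r^*$ is an lgg and is unique up to renaming of generalization variables modulo $\approx_{\tt A}$; and third, that the procedure runs in polynomial time in $|s|+|t|$.

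For existence as a generalizer, I would exhibit substitutions $\sigma_s$ and $\sigma_t$ certifying generalization. In the gap between the positions in $r^*$ corresponding to two consecutive constants from $Q_j$ and $Q_{j+1}$, the expression $r^*$ has $k_j = \min(|S_j|,|T_j|)$ fresh variables $X_{j,1},\ldots,X_{j,k_j}$. Since $|S_j|\ge k_j$, I partition $S_j$ into $k_j$ nonempty contiguous substrings and set $\sigma_s(X_{j,i})$ to the $i$th part, viewed as a flattened $\ffa{f}$-term (or a single constant when the part has length one). I define $\sigma_t$ symmetrically from $T_j$. Flattening of $\ffa{f}$-applications together with the pairwise disjointness of $M_Q,M_S,M_T$ then yields $r^*\sigma_s \approx_{\tt A} s$ and $r^*\sigma_t \approx_{\tt A} t$.

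Next, let $r$ be any \A-generalizer of $s$ and $t$ with witnesses $\tau_s,\tau_t$. A key observation is that under the pure \A-theory (without a unit), every variable must be instantiated to a nonempty \A-term, hence contributes at least one signature constant to the flattened result. Constants appearing in $r$ therefore appear in both $s$ and $t$, which forces them into $M_Q$; the linearity of $Q_1\ldots Q_n$ combined with the matching order of $M_Q$-constants in $s$ and $t$ then forces the constants of $r$ to form a subsequence of $Q_1\ldots Q_n$ in the correct order. If $r$ omits some $Q$-constant $c$, then $c$ is generated by one of $r$'s variables under both $\tau_s$ and $\tau_t$, and ``pulling $c$ out'' of that variable yields a strictly more specific generalizer; iterating produces an expression containing every element of $Q_1\ldots Q_n$ explicitly. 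For the gap between two consecutive $Q$-constants in $r$, if $r$ has $m$ variables there, then $\tau_s$ must split $S_j$ into $m$ nonempty contiguous pieces and $\tau_t$ must split $T_j$ into $m$ nonempty contiguous pieces, forcing $m\le k_j$. From these structural constraints I would build an explicit substitution that witnesses $r \epreceq r^*$.

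The main obstacle will be the least-generality step, where one must argue rigorously about how arbitrary variables of $r$ distribute across the constants of $s$ and $t$ under the flattened associative representation, including the boundary cases in which $Q_1$ or $Q_n$ is empty and the subtle case in which a variable of $r$ might seem to span a $Q$-separator in one of $\tau_s,\tau_t$; ruling this out requires the linearity of $Q_1\ldots Q_n$ together with the non-unit character of \A, which blocks empty instantiations. Once least generality is established, uniqueness modulo $\approx_{\tt A}$ and variable renaming follows by antisymmetry of $\epreceq$: any competing lgg $r'$ would satisfy both $r' \epreceq r^*$ and $r^* \epreceq r'$. For polynomial time, computing $M_Q$ as the multiset intersection of the constants of $s$ and $t$, splitting both strings at the $M_Q$-positions, checking pairwise disjointness of $M_Q,M_S,M_T$ and the linearity of $Q_1\ldots Q_n$, and finally assembling $r^*$ are each linear in $|s|+|t|$ using standard hashing.
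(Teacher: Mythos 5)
The paper states this theorem without any proof, so there is nothing to compare your argument against; judged on its own, your proposal has a genuine gap in the least-generality step, precisely at the point you flag as "the subtle case in which a variable of $r$ might seem to span a $Q$-separator." Linearity and the absence of a unit do \emph{not} rule this case out. Your pulling-out argument presupposes that an omitted $Q$-constant $c$ is produced by \emph{the same} variable of $r$ under both $\tau_s$ and $\tau_t$, but nothing forces this: $c$ may be absorbed by different variables in the two witnessing substitutions, and then no single variable can be split to expose $c$. Once a variable region of $r$ spans an interior $Q$-block, your per-gap bound $m\le k_j$ is replaced by $m\le\min(L_s,L_t)$, where $L_s,L_t$ are the lengths of the whole spanned regions, and since $\min(\sum_j |S_j|,\sum_j |T_j|)$ can strictly exceed $\sum_j\min(|S_j|,|T_j|)$, the substitution $\delta$ with $r\delta\approx_{\tt A} r^*$ that you need may not exist.

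Concretely, take $s=\ffa{f}(\const{a},\const{c_1},\const{q},\const{c_2},\const{c_3},\const{c_4},\const{b})$ and $t=\ffa{f}(\const{a},\const{d_1},\const{d_2},\const{d_3},\const{q},\const{d_4},\const{b})$, which satisfy all conditions of the \A-criterion with $Q_1=\const{a}$, $Q_2=\const{q}$, $Q_3=\const{b}$, $|S_1|=1,|S_2|=3$, $|T_1|=3,|T_2|=1$, hence $k_1=k_2=1$ and $r^*=\ffa{f}(\const{a},X_1,\const{q},X_2,\const{b})$. But $r=\ffa{f}(\const{a},Y_1,Y_2,Y_3,Y_4,Y_5,\const{b})$ is also a generalizer (send $Y_1,\dots,Y_5$ to the five middle constants of $s$, respectively of $t$; note that $\const{q}$ comes from $Y_2$ on one side and from $Y_4$ on the other), and $r\not\epreceq r^*$: the instance $\ffa{f}(\const{a},\const{c_1},\const{q},\const{c_1},\const{b})$ lies in $\sem{\emptyset}{r^*}$ but not in $\sem{\emptyset}{r}$, since a string of length $3$ cannot be split into five non-empty pieces. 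So the completeness half of your argument cannot go through as written, and in fact this instance appears to refute the uniqueness claim of the theorem itself unless the criterion is strengthened (e.g., to a single gap, or to equal per-gap minima on both sides). Your generalizer-existence and polynomial-time arguments are fine; the problem is isolated to this step.
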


\end{document}